\newtheorem{theorem}{Theorem}                             
\newtheorem{lemma}[theorem]{Lemma}
\newtheorem{corollary}[theorem]{Corollary}
\newtheorem{observation}[theorem]{Observation}
\theoremstyle{definition}
\def\myincludegraphics#1{\includegraphics{#1}}
\def\F{{\cal F}}
\newcommand{\btt}{\raisebox{-0.4ex}{\includegraphics{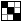}}}
\newcommand{\bttx}{\raisebox{-0.4ex}{\includegraphics{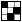}}}
\newcommand{\bdc}{\raisebox{-0.4ex}{\includegraphics{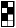}}}
\newcommand{\btd}{\raisebox{-0.4ex}{\includegraphics{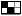}}}
\newcommand{\bddw}{\raisebox{-0.4ex}{\includegraphics{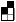}}}
\newcommand{\bdt}{\raisebox{+0.1ex}{\includegraphics{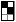}}}
\newcommand{\bccw}{\raisebox{-0.9ex}{\includegraphics{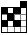}}}
\newcommand{\bdcw}{\raisebox{-0.5ex}{\includegraphics{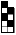}}}
\newcommand{\bct}{\raisebox{-0.4ex}{\includegraphics{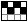}}}
\newcommand{\btc}{\raisebox{-0.4ex}{\includegraphics{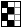}}}
\newcommand{\bdcd}{\raisebox{-0.4ex}{\includegraphics{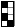}}}
\newcommand{\bjcw}{\raisebox{-0.4ex}{\includegraphics{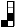}}}
\begin{document}

\title{Irreversible 2-conversion set in graphs of bounded degree\thanks{Most of the research was conducted during the DIMACS/DIMATIA REU program at Rutgers University in June and July 2007, supported by the project ME 886 of the Ministry of
Education of the Czech Republic.}
}

\author{
Jan Kyn\v{c}l\affiliationmark{1,2}\thanks{Partially supported by Swiss National Science Foundation Grants 200021-137574 and 200020-144531 and by the project CE-ITI (GA\v{C}R P202/12/G061) of the Czech Science Foundation.} 
\and
Bernard Lidick\'{y}\affiliationmark{1,3}\thanks{Partially supported by NSF grant DMS-1600390.}  
\and
Tom\'{a}\v{s} Vysko\v{c}il\affiliationmark{1,4}
} 


\affiliation{
  Charles University, Prague, Czech Republic\\
  EPFL, Switzerland\\
  Iowa State University, Ames IA, USA\\
  Rutgers University, Piscataway NJ, USA
}

\keywords{irreversible $k$-conversion process, spread of infection, bootstrap percolation, NP-complete problem, matroid parity problem, toroidal grid}

\received{2016-11-29}
\revised{2017-6-8}
\accepted{2017-9-14}
\publicationdetails{19}{2017}{3}{5}{2559}

\maketitle


\begin{abstract}
An {\em irreversible $k$-threshold process\/} (also a {\em $k$-neighbor bootstrap percolation\/}) is a dynamic process
on a graph where vertices change color from white to black
if they have at least $k$ black neighbors.
An {\em irreversible $k$-conversion set\/} of a graph $G$ is a subset $S$ of vertices
of $G$ such that the irreversible $k$-threshold process starting with the vertices of $S$ black eventually
changes all vertices of $G$ to black.
We show that deciding the existence of an irreversible 2-conversion set
of a given size is NP-complete, even for graphs of maximum degree $4$,
which answers a question of Dreyer and Roberts.
Conversely, we show that for graphs of maximum degree $3$, the minimum size of an irreversible 2-conversion set can be computed in polynomial time.
Moreover, we find an optimal irreversible 3-conversion set for
the toroidal grid, simplifying constructions of Pike and Zou.
\end{abstract}


\section{Introduction}

Roberts~\cite{roberts-02} and Dreyer and Roberts~\cite{dreyer-09} recently studied mathematical modelling of the spread of infectious diseases. They
used the following model.

Let $G=(V,E)$ be a graph with vertices colored white and black.
An \emph{irreversible $k$-threshold process} is a process where vertices 
change color from white to black. More precisely, a white vertex becomes black
at time  $t+1$ if at least $k$ of its neighbors are black at time $t$.

An \emph{irreversible $k$-conversion set} $S$ is a subset of $V$ such that
the irreversible $k$-threshold process starting with the vertices of $S$ set to
black and all others white will result in a graph $G$ with all vertices black
after a finite number of steps. 

More general models of spread of infectious diseases and the complexity of the 
related problems were studied by Boros and Gurwich~\cite{boros-07}.

A natural question to ask is what is the minimum size of 
an irreversible $k$-conversion set in a graph $G$. 

\begin{center}
  \fbox{
   \parbox{0.97\columnwidth}{
     \begin{tabular}{lll}
       \multicolumn{3}{l} {Problem I$k$CS$(G,s)$:} \\
       \textbf{Input:}  & \multicolumn{2}{l}{a graph $G$ and a positive integer $s$} \\
       \textbf{Output:} & YES & if there exists an irreversible $k$-conversion set of size $s$ in $G$ \\
                        & NO & otherwise \\
     \end{tabular}
    }
  }
\end{center}

Dreyer and Roberts~\cite{dreyer-09} proved that I$k$CS is NP-complete for every fixed $k \ge 3$ 
by an easy reduction from  the independent set problem.
For $k = 1$ the problem is polynomial since one black vertex per connected
component is necessary and sufficient.
Dreyer and Roberts~\cite{dreyer-09} asked what is the complexity of the I$k$CS problem if $k = 2$.
As the first result of this paper we resolve this open question.

\begin{theorem}\label{thm:np}
The problem I2CS is NP-complete even for graphs of maximum degree $4$.
\end{theorem}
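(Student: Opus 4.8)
Given a candidate set $S$ with $|S|=s$, one simulates the irreversible $2$-threshold process: in each round colour black every white vertex having at least two black neighbours. The process stabilises after at most $|V(G)|$ rounds, each round is computed in polynomial time, and $S$ is a $2$-conversion set exactly if the final colouring is all black. Hence I2CS lies in NP. (Since adding a vertex to a conversion set again yields a conversion set, asking for a conversion set of size exactly $s$ is equivalent to asking for one of size at most $s$.)

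\textbf{The reduction.} The plan is to reduce from a bounded-occurrence variant of $3$-SAT (some restriction of $3$-SAT in which each variable occurs in a bounded number of clauses; such a restriction is still NP-complete, and the occurrence bound is what will let us keep the maximum degree at $4$). From a formula $\varphi$ I would build a graph $G_\varphi$ and a target $s$ so that $G_\varphi$ has an irreversible $2$-conversion set of size $s$ iff $\varphi$ is satisfiable. The ingredients are: (a) a \emph{variable gadget} with exactly two minimum-size ``states'', one for each truth value --- a natural candidate is an even cycle $C_{2m}$, whose only minimum $2$-conversion sets are its two colour classes, and each of whose vertices still has two free incidences, so it can later be wired up without exceeding degree $4$; (b) \emph{wire gadgets} that carry the chosen truth value from a variable to the clauses containing it, together with splitters realising the bounded fan-out; and (c) a \emph{clause gadget} that can be completed within its allotted budget precisely when at least one incident literal arrives ``true''. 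Taking $s$ to be the sum of the prescribed budgets of all gadgets finishes the construction; this can clearly be done in polynomial time.

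\textbf{Correctness.} For the easy direction, a satisfying assignment prescribes the state of each variable gadget; seeding exactly those vertices, propagating along the wires and completing each clause gadget in its cheap way blackens all of $G_\varphi$ with exactly $s$ initial black vertices. For the converse I would show that every $2$-conversion set of size at most $s$ is in ``normal form'': a counting argument forces each gadget to receive exactly its budgeted number of seeds, and then a local analysis of the gadgets (variable, wire, clause together) forces each variable gadget into one of its two intended states in a way consistent across the wires, so that reading off the states gives a satisfying assignment.

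\textbf{Main obstacle.} The delicate point specific to threshold $2$ is the wire gadget. Unlike for threshold $1$, a signal does not travel along a path: a white vertex of degree $2$ turns black only once \emph{both} its neighbours are black, so a chain of such vertices never resolves on its own. A wire must therefore be a genuinely ``two-track'' structure, and one has to prove both that it transmits exactly the intended value and that it cannot be short-circuited by some clever reallocation of the seed budget. Interlocking these wires with the $C_{2m}$ variable gadgets and with the clause gadgets while keeping every vertex of degree at most $4$ --- in particular at the fan-out points and at the clause junctions, where several wires meet --- is the part that needs real care (and presumably a fair amount of small-case checking); the remaining steps are the routine bookkeeping sketched above.
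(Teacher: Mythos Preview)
Your proposal is a plan rather than a proof: you correctly isolate the crucial difficulty --- a wire that transmits under threshold $2$ while respecting the degree bound --- and then explicitly leave it unsolved (``the part that needs real care \dots\ presumably a fair amount of small-case checking''). This is not a detail to be filled in later; it is the entire content of the reduction. Everything else (NP-membership, the budget bookkeeping, the normal-form counting) is routine once a working wire exists.

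The paper's construction rests on an idea you do not mention: the systematic use of \emph{degree-$1$ vertices}. Since every leaf must lie in any $2$-conversion set, hanging a leaf on a vertex both forces part of the budget and gives that vertex one guaranteed black neighbour. The paper's \emph{one-way gadget} is a short path decorated with leaves; with the leaves forced black and the internal vertices white, blackness propagates from the end to the start but provably not the other way. This solves your wire problem outright. It also makes fan-out trivial: the variable gadget is a triangle $x_iy_iz_i$ with two \emph{antennas} (paths of arbitrary length, one vertex per occurrence of the literal, each carrying a leaf and a one-way to its clause), so ordinary $3$-SAT suffices and no bounded-occurrence restriction is needed. A collecting path gathers the clause outputs and a distributing path feeds satisfaction back to the $z_i$'s so that the unused halves of the variable gadgets eventually turn black. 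The budget is then exactly $|L|+n$ (leaves plus one seed per variable), and your hoped-for ``counting argument'' becomes a one-line observation: the $|L|$ leaves are forced, and each triangle needs at least one of $x_i,y_i,z_i$, so the remaining $n$ seeds land one per variable, in $x_i$ or $y_i$.

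Your even-cycle variable gadget is correct in isolation --- the two colour classes really are the only minimum $2$-conversion sets of $C_{2m}$ --- but without a wire it floats free, and your two-track intuition for the wire does not obviously lead anywhere. The missing primitive was not a two-track channel but a leaf-decorated one-way.
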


A subset $W$ of vertices of a graph $G=(V,E)$ is a \emph{vertex feedback set} if $V \setminus W$ is acyclic.
For 3-regular graphs, the I2CS problem is equivalent to finding a vertex feedback set, 
which can be solved in polynomial time~\cite{ueno-88}. 
We extend this result to graphs of maximum degree $3$.

\begin{theorem}\label{theorem_max3}
The problem I2CS is polynomially solvable for graphs of maximum degree $3$.
\end{theorem}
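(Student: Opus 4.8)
The plan is to reduce the I2CS problem on graphs of maximum degree $3$ to a polynomially solvable matroid problem, extending the vertex-feedback-set idea that handles the $3$-regular case. First I would observe that in any graph $G$ of maximum degree $3$, a vertex of degree $\le 1$ is irrelevant: a degree-$0$ vertex must lie in every $2$-conversion set, and a degree-$1$ vertex can never be converted by the process, so it too must be black initially; after forcing these into $S$ and deleting them, I may assume $G$ has minimum degree $2$ and maximum degree $3$. Next I would characterize, for such $G$, exactly which sets $S$ are irreversible $2$-conversion sets. The key structural claim is the analogue of the $3$-regular fact: $S$ is a $2$-conversion set if and only if $S$ meets every ``obstruction'' — and the obstructions are precisely the edge-maximal subgraphs in which every vertex has degree $\ge 2$, i.e.\ the $2$-cores. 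Equivalently, $V\setminus S$ must induce a forest in a suitable sense. The subtlety over the $3$-regular case is the degree-$2$ vertices: a degree-$2$ vertex $v$ outside $S$ is converted exactly when \emph{both} its neighbors are (eventually) black, so long induced paths of degree-$2$ vertices propagate conversion from one end to the other, and a cycle of degree-$2$ vertices is never converted unless it is hit directly. I would make this precise by contracting each maximal path of internal degree-$2$ vertices to a single edge, obtaining an auxiliary multigraph $H$ on the degree-$3$ vertices (plus isolated handling of all-degree-$2$ components, which are cycles and hence need exactly one black vertex each). Then $S$ restricted to the degree-$3$ vertices, viewed in $H$, must be a vertex feedback set of $H$, and conversely any feedback set of $H$ lifts back; additionally $S$ may, instead of containing a degree-$3$ endpoint, contain an internal degree-$2$ vertex of the corresponding path, which still kills that path.

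Having reduced to ``find a minimum-weight set hitting all cycles of a cubic (multi)graph $H$, where each cycle may alternatively be hit along one of its subdivided edges,'' I would phrase this as a matroid parity / matroid matching problem, exactly as in the $3$-regular case via Ueno, Kajitani and Gotoh~\cite{ueno-88}. Recall their approach: for a graph $H$ with maximum degree $3$, the minimum vertex feedback set can be found by a reduction to the matroid parity problem for graphic matroids, which Lov\'asz's polynomial algorithm solves. I would set up the ground set to consist of the edges of $H$ (or of a bounded blow-up of $H$ encoding the subdivided edges), pair up the (at most three) edges at each degree-$3$ vertex, and argue that a maximum matroid parity set corresponds to a maximum-size ``forest-like'' complement $V\setminus S$, so its complement is a minimum $2$-conversion set. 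The weighting needed to account for the choice of hitting a path internally rather than at its endpoint can be absorbed by replacing each subdivided edge with a gadget of the appropriate number of parallel ground-set elements, keeping the instance polynomial in $|V(G)|$.

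The main obstacle I expect is the careful case analysis in the structural characterization — in particular pinning down precisely when a component or subgraph consisting entirely of degree-$2$ vertices forces an extra element of $S$, and verifying that the ``hit a path at an interior vertex'' option does not create spurious savings (e.g.\ when two paths share an endpoint, or when a path has length one so there is no interior vertex). A secondary technical point is checking that the contraction to $H$ and the subsequent matroid-parity encoding are faithful in \emph{both} directions, i.e.\ that every optimal solution of the matroid problem pulls back to a genuine $2$-conversion set of $G$ of the same size, not merely that one exists of size $\le$ the optimum. Once the reduction is shown to be faithful and polynomial-time computable, the theorem follows immediately from the polynomial-time solvability of matroid parity for graphic matroids (Lov\'asz), as used in~\cite{ueno-88}.
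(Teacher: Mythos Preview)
Your reduction contains two genuine errors that break the approach.

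First, deleting a degree-$1$ vertex $v$ after forcing it into $S$ is not sound. With $v\in S$, its neighbour $u$ already has one black neighbour, so in $G$ it needs only one more; but in $G-v$ with the unchanged threshold $2$, $u$ needs two black neighbours among its \emph{remaining} neighbours, a strictly stronger requirement. On the path $v\!-\!u\!-\!w\!-\!x$ the minimum $2$-conversion set has size $3$ (namely $\{v,x\}$ together with one of $u,w$), whereas your iterated deletion forces all four vertices into $S$. Similarly, for a triangle with a single pendant, the true optimum is $2$ but your reduction gives $3$. The paper avoids this by \emph{raising} the degree of each degree-$1$ vertex to $3$ via a fixed gadget $H_5$, never by deleting.

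Second, the claim that ``long induced paths of degree-$2$ vertices propagate conversion from one end to the other'' is false. If $u_0$ and $u_{k+1}$ are black and $u_1,\dots,u_k$ are white degree-$2$ vertices on the path between them, then for $k\ge 2$ each $u_i$ has exactly one black neighbour and nothing ever converts. A path of $k$ internal degree-$2$ vertices between two black endpoints in fact requires $\lfloor k/2\rfloor$ further vertices in $S$, not zero or one; hence contracting such a path to a single edge of a cubic multigraph $H$ discards essential information, and the ``hit the path at one interior vertex'' option cannot compensate. The minimum $2$-conversion set is therefore \emph{not} a feedback-vertex-set problem on the contracted graph, and the proposed matroid-parity encoding does not model the right quantity.

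The paper's route is essentially the opposite of contraction: it \emph{adds} one edge to every degree-$2$ vertex (all these new edges forming a caterpillar) to obtain a $3$-regular supergraph $G_3$, and then shows that the linear $2$-polymatroid of Ueno--Kajitani--Gotoh on $V(G_3)$, \emph{restricted to the original vertex set} $V_2$, has as its spanning sets exactly the irreversible $2$-conversion sets of $G_2$. The minimum spanning set is then computed via Lov\'asz's Gallai-type identity $\nu+\rho=f(V)$ and his polynomial algorithm for linear $2$-polymatroid matching.
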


Boros and Gurwich~\cite{boros-07} proved that
determining the minimum size of the conversion set in graphs of maximum degree $3$ is NP-complete if every vertex has its own threshold; that is, when a part of the input is a function $t:V\rightarrow\mathbb{N}$ assigning the threshold for conversion individually to each vertex rather than having some universal threshold number $k$.

Note that the problem I2CS$(G,s)$ is polynomial if the maximum degree of $G$
is at most $2$ as a path of length $l$ requires $\lceil \frac{l+1}{2} \rceil$ black vertices and a cycle of
length $l$ requires $\lceil \frac{l}{2} \rceil$ vertices.

We also give a construction of an optimal 
irreversible 3-conversion set for a toroidal grid 
$T(m,n)$, which is the Cartesian product of the cycles $C_m$ and $C_n$.
Flocchini et al.~\cite{flocchini-04} and Luccio~\cite{luccio-98} gave lower and upper bounds differing by a linear $O(m+n)$ term; see also~\cite{dreyer-09}. Pike and Zou~\cite{pike-06} gave an optimal construction. We present a simpler optimal construction.

\begin{theorem}\label{thm:torus}
Let $T$ be a toroidal grid of size $m \times n$, where $m, n \ge 3$.
If $n=4$ or $m=4$ then $T$ has an irreversible 3-conversion set
of size at most $\frac{3mn + 4}{8}$.
Otherwise, $T$ has an irreversible 3-conversion set
of size at most $\frac{mn + 4}{3}$.
\end{theorem}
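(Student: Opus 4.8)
The plan is to give explicit periodic patterns of black vertices on the torus and to verify that the $3$-threshold process starting from them turns everything black. Think of $T(m,n)$ with vertices indexed by $(i,j) \in \mathbb{Z}_m \times \mathbb{Z}_n$, where each vertex has four neighbors $(i\pm 1, j)$ and $(i, j\pm 1)$. For the main case I would look for a set $S$ that is a union of cosets of a sublattice of $\mathbb{Z}_m \times \mathbb{Z}_n$, chosen so that density is about $1/3$ and so that every white vertex lies in a configuration where, after a bounded number of rounds, it acquires three black neighbors. A natural candidate is a pattern built from diagonal lines: for instance take $S$ to consist of all $(i,j)$ with $2i+j \equiv 0 \pmod{3}$ (when $3 \mid m$ or $3 \mid n$ this is a genuine periodic set, which is why the clean bound $\frac{mn}{3}$ and the $+4$ correction term appear — the additive constant absorbs a bounded number of extra seed vertices needed to ``start'' the infection when the divisibility is not exact or the lines wrap around awkwardly). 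The verification is then a finite local check: I would show that any white vertex with two black neighbors on such a diagonal configuration gets a third once its white neighbors flip, and that the process cannot stall, e.g. by exhibiting a ``sweep'' direction along which newly blackened vertices propagate around each cycle.

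For the special case $n=4$ (or $m=4$ by symmetry) the torus is only four vertices wide in one direction, and the generic diagonal pattern is not available or not efficient; here I would use a different periodic pattern with density $3/8$ instead of $1/3$. Concretely, in each ``column block'' of $C_m \times C_4$ one designs a $2 \times 4$ or $4 \times 4$ tile containing the right number of black cells (three out of every eight) so that the two white cells in the tile each see at least three black neighbors once their own neighbors (possibly in adjacent tiles, after one round) are black. Again a bounded number of extra seeds, accounted for by the $+4$, handles the wrap-around. The bulk of the work is simply drawing the tile and checking the at-most-two-round local spreading rules.

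I would organize the proof as: (1) set up coordinates and state the two patterns explicitly; (2) prove a small \emph{propagation lemma} — something like ``if in some row every other vertex is black, then after one step the whole row is black'' and similar one-step statements for the diagonal pattern — since these local facts are reused throughout; (3) show the seed set plus the $O(1)$ correction vertices triggers, via the propagation lemma, a wave that blackens the entire torus; (4) count $|S|$ and check it is at most $\frac{mn+4}{3}$ (resp. $\frac{3mn+4}{8}$), handling the cases $3 \mid mn$ vs.\ not (resp.\ parity of $m$) separately to see where the constant $4$ is actually needed. The main obstacle I anticipate is not any single hard step but getting the boundary bookkeeping right: because the torus has no boundary, the periodic pattern must close up consistently, and when $m$ or $n$ is not divisible by the pattern period there is a ``seam'' where the infection could stall; choosing the $O(1)$ extra black vertices to break every such potential stall, while keeping the total within the stated bound, is the delicate part. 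A secondary subtlety is ensuring the process actually terminates with \emph{all} vertices black rather than reaching a stable partial configuration — this is exactly what the propagation lemma, applied repeatedly in a fixed sweep direction, is designed to rule out.
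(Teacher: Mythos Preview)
Your diagonal pattern $S=\{(i,j):2i+j\equiv 0 \pmod 3\}$ is essentially the paper's basic $3\times 3$ tile, so the overall shape of the construction is right. The gap is in what you expect to happen next. With this $S$ every white vertex has \emph{exactly} two black neighbors and two white neighbors, so the configuration is completely stable: nothing flips in the first step, hence nothing ever flips. There is no ``sweep direction'' and your propagation lemma, as you describe it, is simply false for this pattern. The white vertices form a $2$-regular subgraph, i.e.\ a disjoint union of cycles; on $T(3k,3l)$ there are exactly $g=\gcd(k,l)$ of them. Adding one extra black seed does start an infection that runs around one of these cycles, but since the cycles are the connected components of the white subgraph, infecting one cycle never touches the others. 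So a naive fix needs $g$ extra seeds, and $g$ is not bounded---the ``$+4$'' cannot absorb it.

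The idea you are missing, and which the paper supplies, is that before adding any extra seed you can \emph{rearrange} (not add) the black squares inside $g-1$ of the $3\times 3$ tiles so that the $g$ white cycles are spliced into a single long white cycle; only then does one additional black vertex suffice to break that cycle and trigger a full infection. This merging trick is the heart of the argument and is what keeps the surplus bounded by a constant. The remaining cases ($m\not\equiv 0$ or $n\not\equiv 0 \pmod 3$, and the special $n=4$ case) are handled, as you guessed, by border tiles of width $2$ or $4$ that extend the grid without creating new white cycles, so once you have the merge-then-break idea the rest of your outline goes through.
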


Theorems~\ref{thm:np} and~\ref{thm:torus} appeared in our early preprint~\cite{KLV09_nakaza}.
When preparing the final version of this paper, we found that Centeno et al.~\cite{CDPRS11_opisovaci} have published a different proof that the problem I2CS is NP-complete. The graph in their construction has maximum degree $11$.
Later an anonymous referee pointed to us that the NP-completeness of the I2CS problem was first proved by Chen~\cite{Ch09_approximability}, and that an independent proof of Theorem~\ref{thm:np} has been published by Penso et al.~\cite{PPRS_P3_convexity}.
Most recently, Takaoka and Ueno~\cite{TU15_irreversible} gave an alternative proof of Theorem~\ref{theorem_max3}, solving our problem stated in an earlier version of this paper.

Balogh and Pete~\cite{BaPe98_random_disease} reported tight asymptotic bounds on the minimum size of an irreversible $k$-conversion set in the $d$-dimensional integer grid. Balister, Bollob\'as, Johnson and Walters~\cite{BBJW10_majority} obtained more precise bounds for the case $d+1\le k\le 2d$.

An irreversible $k$-conversion process is equivalently also called a {\em ($k$-neighbor) bootstrap percolation}. Bootstrap percolation was introduced by Chalupa, Leath and Reich~\cite{CLR79_bootstrap_Bethe} as a model for interactions in magnetic materials. Bootstrap percolation theory is typically concerned with $d$-dimensional lattices (and in recent years, other classes of graphs as well) where each vertex is ``infected'' independently at random with some fixed probability.
See~\cite{Adler91_bootstrap} for an early review of the subject or~\cite{DeGLawDaw09_encyclopedia} for a recent survey.
See~\cite{BBDM12_sharp_threshold,Uzzell12_improved} for the most recent results for $d$-dimensional integer grids.

Several authors~\cite{Benevides-13, Marcilon-14} studied the computational complexity of the minimum number of steps of the bootstrap percolation needed to percolate the whole graph.

Many other variants of bootstrap percolation have been studied in the literature. Examples include hypergraph bootstrap percolation~\cite{BBMR12_lingebra} or weak $H$-saturation of graphs~\cite{Al85_problem,Bo68_weakly}.


\section{Irreversible 2-conversion set is NP-complete}

In this section we give a proof of Theorem~\ref{thm:np}.

The problem is in NP, since the verification whether $S \subseteq V$ is 
an irreversible 2-conversion set can be done by iterating the conversion process in at most $|V|$ steps. Hence the verification
can be done in a polynomial time.

In the rest of the proof we show that I2CS$(G,s)$ is 
NP-hard by a polynomial-time reduction from 3-SAT.
We introduce a variable gadget, a clause gadget and a gadget that
checks if all clause gadgets are satisfied. 

Since a white vertex needs two black neighbors to become black,
we have the following observation.

\begin{observation}\label{obs:degone}
Every irreversible 2-conversion set contains all vertices of degree $1$.
\end{observation}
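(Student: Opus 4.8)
The plan is to argue directly from the definition of the threshold process that a vertex of degree $1$ can never be recolored black, hence it must already belong to any $2$-conversion set. Fix an irreversible $2$-conversion set $S \subseteq V$ and a vertex $v$ with $\deg(v)=1$; let $u$ be the unique neighbor of $v$. Suppose for contradiction that $v \notin S$. Then $v$ is white at time $0$, and I claim $v$ is white at every time $t$: this is immediate by induction, since for $v$ to turn black at time $t+1$ it would need at least $2$ black neighbors at time $t$, but $v$ has only the single neighbor $u$, so at most $1$ of its neighbors can be black at any time. Thus $v$ never becomes black, contradicting the assumption that $S$ is an irreversible $2$-conversion set. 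Therefore $v \in S$, and since $v$ was an arbitrary degree-$1$ vertex, $S$ contains all of them.

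I do not anticipate any real obstacle here — the statement is essentially a restatement of the threshold rule for $k=2$ applied to a vertex with fewer than $k$ neighbors. The only points to state carefully are that the argument handles the initial time step (where $v \notin S$ means $v$ is white) and that ``never becomes black'' indeed violates the definition of a conversion set (which requires all vertices black after finitely many steps). If desired, one can note in passing that the same reasoning shows every isolated vertex must also lie in $S$, and more generally gives a lower bound on how much of any pendant path $S$ must contain; but only the degree-$1$ case is needed in what follows.
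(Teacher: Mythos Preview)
Your proof is correct and matches the paper's approach exactly: the paper simply prefaces the observation with the sentence ``Since a white vertex needs two black neighbors to become black,'' and leaves the details implicit. You have merely spelled out the obvious induction that this sentence summarizes.
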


According to this observation, in the figures of the 
gadgets we draw vertices of degree one black.

Let $\F$ be an instance of 3-SAT. We denote the number of variables by $n$
and the number of clauses by $m$.
We construct a graph $G_{\F}$ and give a number $s$ such that $\F$ is satisfiable if
and only if $G_{\F}$ has an irreversible 2-conversion set of size $s$.

\begin{figure}
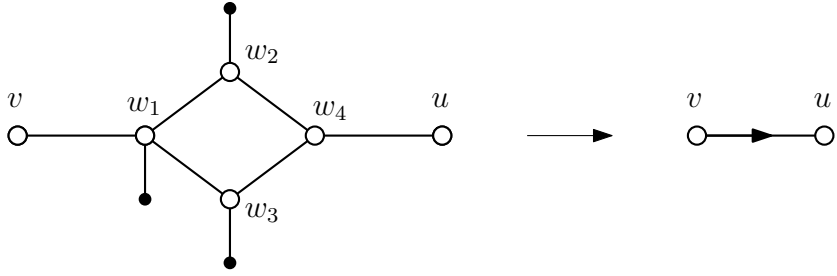

\begin{center}
 \myincludegraphics{img/oneway}
\end{center}
\caption{A one-way gadget.\label{fig:oneway}}
\end{figure}

First we introduce a \emph{one-way} gadget; see Figure~\ref{fig:oneway}.
The gadget contains two vertices $u$ and $v$ which are called \emph{start} and \emph{end} of the one-way gadget.
Vertices  $w_1, w_2, w_3$ and $w_4$ are called \emph{internal} vertices of the gadget.

\begin{observation}
Let $u$ and $v$ be start and end of the one-way gadget. If internal vertices
are white at the beginning then the following holds:
\begin{enumerate}
\item If $v$ is black then $u$ gets a black neighbor from the gadget in three steps.
\item The vertex $w_4$ can become black only after $v$ becomes black.
\end{enumerate}
\end{observation}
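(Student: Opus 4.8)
The plan is to analyze the one-way gadget of Figure~\ref{fig:oneway} purely combinatorially, by tracking which internal vertices can turn black and under what conditions, using only the rule that a white vertex needs two black neighbors to turn black. First I would recover from the figure the precise adjacency structure: the gadget consists of the start vertex $u$, the end vertex $v$, the four internal vertices $w_1,w_2,w_3,w_4$, and (by Observation~\ref{obs:degone}) some pendant degree-one vertices that are forced black. I would argue both items from this structure.

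For item~(1), I would assume $v$ is black at time $t$ and trace the propagation forward. The key point is that $v$ together with the permanently-black pendant vertices gives certain internal vertices a second black neighbor, so they turn black at time $t+1$; these in turn give the next internal vertices two black neighbors at time $t+2$, and so on, so that after three steps some internal vertex adjacent to $u$ has become black, i.e. $u$ acquires a black neighbor inside the gadget. The bookkeeping here is just checking, for each internal vertex along the relevant path from $v$ toward $u$, that at the appropriate time it has at least two black neighbors (one coming "down the chain'' and one being a forced pendant). This is the routine direction.

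For item~(2), I would argue the contrapositive in the form of an invariant: as long as $v$ is white, the vertex $w_4$ (and the relevant internal vertices feeding it) stays white. Concretely, $w_4$ has only two neighbors inside the gadget by which it could receive black status while $v$ is still white — its neighbors other than $v$ — and these neighbors themselves cannot both become black unless $v$ has already become black. So I would set up an induction on time $t$: assuming $v$ is white at all times up to $t$, show that at most one of $w_4$'s non-$v$ neighbors is black at time $t$, hence $w_4$ is still white at time $t+1$; this requires checking that those neighbors in turn depend on $v$. The main obstacle is exactly this step: one must be careful that no alternate short path through the internal vertices (or through $u$, if the process is being run on a larger graph $G_\F$ in which $u$ may already be black) can turn $w_4$ black "from the wrong side.'' Handling the case where $u$ is black requires checking that $u$ alone — without help from $v$ — cannot push two black neighbors onto any vertex adjacent to $w_4$; this is the content of the gadget being "one-way,'' and it is where the specific number and placement of internal vertices and pendants matters. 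Once the adjacency is pinned down from the figure, this is a short finite case check, but it is the crux of the argument.
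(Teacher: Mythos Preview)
The paper gives no proof of this observation at all; it is stated and left to the reader as immediate from inspection of Figure~\ref{fig:oneway}. Your plan---forward step-by-step propagation for item~(1), and an invariant/induction on time for item~(2) with explicit attention to the possibility that $u$ is already black---is exactly the right way to verify it, and is in fact more careful than anything the paper writes down. There is nothing to compare against beyond that.
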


We refer to the one-way gadget by a directed edge in
the following figures. Later, we set $s$ so that $S$ cannot contain
any internal vertices of one-ways. 
Thus, in the rest of the proof we assume that all internal vertices
are white at the beginning.

\begin{figure}
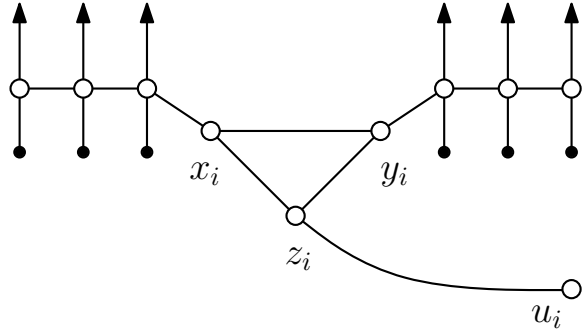

\begin{center}
 \myincludegraphics{img/var}
\end{center}
\caption{A variable gadget $g(X_i)$ connected  to a vertex $u_i$ of a distributing path.\label{fig:variable}}
\end{figure}

\subsection{Variable gadget}
A gadget $g(X_i)$ for a variable $X_i$, where $1 \le i \le n$, consists of a triangle $x_iy_iz_i$ 
and two \emph{antennas}; see Figure~\ref{fig:variable}. 
The length of the antenna connected to $x_i$ and $y_i$ is equal
to the total number of occurrences of $X_i$ and $\neg X_i$, respectively,
taken over all clauses of $\F$. 
We call the white vertices of the $x_i$ antenna \emph{positive outputs}
and the white vertices of the $y_i$ antenna \emph{negative outputs}.
One-way gadgets with starts in the outputs have ends in clause gadgets.
The vertex $z_i$ is adjacent to a vertex $u_i$ lying on a distributing path,
which we define later.

We show that exactly one of $x_i$ and $y_i$ is black at the beginning.
This represents the value of the variable $X_i$.
The vertex $x_i$ corresponds to the true and $y_i$ to the false evaluation of $X_i$.
The purpose of the connection between $u_i$ and $z_i$ is to convert 
all vertices of the gadget to black if $\F$ is satisfiable.

\begin{observation}\label{obs:variable}
Let $S$ be an irreversible 2-conversion set. The gadget $g(X_i)$ has the following properties.
\begin{enumerate}
\item[(a)] If $x_i$ is black then all positive outputs will become black in the process. 
      Similarly for $y_i$ and negative outputs.
\item[(b)] If two of $x_i,y_i,z_i$ are black then all vertices of the gadget will become black
      in the process.
\item[(c)] $S$ must contain at least one of the vertices $x_i,y_i,z_i$. 
\item[(d)] If $S$ contains exactly one vertex of the gadget (except the vertices of degree $1$) 
      then it must be $x_i$ or $y_i$. 

\item[(e)] If $S$ contains exactly one vertex of the gadget then
      $z_i$ gets black only if $u_i$ gets black.

\end{enumerate}
\end{observation}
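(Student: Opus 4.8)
The plan is to verify each of the five properties (a)--(e) by directly analyzing the local structure of the variable gadget $g(X_i)$, namely the triangle $x_iy_iz_i$ together with its two antennas and the edge $z_iu_i$, always under the standing assumption that the internal vertices of all one-way gadgets start white. For (a), I would argue along the antenna: if $x_i$ is black, its first antenna neighbor already has two black neighbors (it is a path vertex adjacent to $x_i$ and to the next antenna vertex, or a degree-one vertex which is black by Observation~\ref{obs:degone}), so it turns black; then induct along the antenna. Actually the cleaner way is to note the antenna is a path hanging off $x_i$ whose far end has a pendant black vertex, so blackness propagates inward from the pendant end and outward from $x_i$ and they meet; hence all positive outputs become black. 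The same argument applies verbatim to $y_i$ and the negative outputs.

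For (b), the triangle $x_iy_iz_i$ is a $3$-cycle, so once two of its vertices are black the third has two black neighbors and becomes black; then by (a) both antennas fill up, and every remaining vertex of the gadget (the output vertices and the pendant vertices, the latter black from the start) is already accounted for. For (c), suppose $S$ contains no vertex of $\{x_i,y_i,z_i\}$; I would show the three triangle vertices can never all become black. Consider the first moment one of them, say without loss of generality it turns black; at that moment it needs two black neighbors, but $x_i$'s neighbors outside the triangle lie on its antenna or are endpoints of one-ways with white internal vertices, and I would check that no such neighbor can be black before some triangle vertex is — formalizing a ``blackness must enter the triangle from inside'' statement. This monotone/first-time argument is the technical heart and the step I expect to be the main obstacle: one must be careful that the edge $z_iu_i$ and the one-way directions genuinely prevent external blackness from reaching the triangle prematurely, using the second part of the one-way observation (that the internal vertex adjacent to the start can become black only after the end of the one-way does, hence only after the output — a triangle-fed vertex — is black).

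For (d), assume $S$ contains exactly one non-pendant vertex of the gadget; by (c) it is one of $x_i,y_i,z_i$, and I would rule out $z_i$: if $z_i\in S$ but $x_i,y_i\notin S$, then for $x_i$ to become black it needs two black neighbors, one of which would have to come from its antenna or a one-way, which by the argument in (c) is impossible before $x_i$ itself is black — contradiction; so the chosen vertex is $x_i$ or $y_i$. Finally (e) is a direct corollary of the same reasoning: with exactly one non-pendant vertex of the gadget in $S$ (so one of $x_i,y_i$ black, the other two triangle vertices not in $S$), the only way $z_i$ can acquire two black neighbors is to have both its triangle neighbors black or one triangle neighbor and $u_i$ black; since only one of $x_i,y_i$ is ever black until $z_i$ itself turns (again by the no-premature-external-blackness claim), $z_i$ can turn black only once $u_i$ is black. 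Throughout, the one recurring lemma I would isolate and prove once is: no output vertex and no internal one-way vertex of this gadget can become black before the triangle vertex feeding it becomes black; everything else is a short case check on the triangle and the antenna paths.
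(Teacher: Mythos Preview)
Your approach for (a), (b), (d), (e) matches the paper's. The one place you diverge is (c), where you overcomplicate and nearly talk yourself into a difficulty that is not there.

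The key structural fact you seem to have missed is that each of $x_i,y_i,z_i$ has \emph{exactly one} neighbour outside the triangle: $x_i$ is joined only to the first positive-antenna vertex, $y_i$ only to the first negative-antenna vertex, and $z_i$ only to $u_i$. Given this, (c) is a one-line counting argument (and this is exactly what the paper does): if none of $x_i,y_i,z_i$ is in $S$, then at every time each of them has at most one black neighbour --- the two triangle neighbours are white and there is only one external neighbour --- so none of them ever turns black. There is no need for a first-moment argument or a ``no premature external blackness'' lemma, and in particular your worry about the edge $z_iu_i$ disappears: even if $u_i$ is black, that is still only one black neighbour of $z_i$.

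The ``outputs cannot turn black before $x_i$ (resp.\ $y_i$)'' claim that you isolate is genuinely needed, but only for (d) and (e), and there the paper uses it in the same way you propose: outputs start white, are fed inside the gadget only from $x_i$/$y_i$ along the antenna, and the one-way gadgets block any blackness from entering the antenna from the clause side. So your plan for (d) and (e) is fine; just streamline (c) using the single-external-neighbour observation, and phrase (d) accordingly (the ``second'' black neighbour of $x_i$ must be its unique antenna neighbour, not ``its antenna or a one-way'').
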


\begin{proof}
The first two properties are easy to check and hence we skip them.

First we check the property (c). Every vertex of the triangle $x_iy_iz_i$ has only one 
neighbor outside the triangle. Hence if all three vertices are white, they remain
white forever since each of them has at most one black neighbor.
Therefore $S$ must contain at least one of them.

Now we check the property (d). If $S$ contains only one of $\{x_i,y_i,z_i\}$
then all positive
and negative outputs are white at the beginning. Moreover, the positive 
outputs may become black only if $x_i$ gets black. Similarly for
negative outputs and $y_i$. 

Suppose for contradiction that $z_i \in S$. Then both $x_i$ and $y_i$ have only one
black neighbor ($z_i$) at the beginning. During the process the other
black neighbor has to be some output vertex which is not possible.
Hence $S$ must contain $x_i$ or $y_i$.

Finally, we check the property (e). By (d) we know that $z_i$ is white
at the beginning. Assume without loss of generality that $y_i$ is also
white while $x_i$ is black. The vertex $z_i$ can get black if $y_i$ or $u_i$
gets black. So assume for contradiction that $y_i$ gets black before
$z_i$. The only possibility is that the vertex from the antenna adjacent 
to $y_i$ gets black. 
But it is not possible since output vertices are white at the
beginning and they are connected to the rest of the graph by one-ways.
\end{proof}

Note that if $x_i$ or 
$y_i$ is in $S$ then it is still possible that the process
converts all vertices of the gadget to black, as the vertex $u_i$ may become black 
during the process.

Let $L$ be the set of all degree $1$ vertices in $G_{\F}$.
We set the parameter $s$ to $|L| + n$. Thus every variable
gadget has exactly one of $x_i$ and $y_i$ black at the beginning
and all other vertices of $G_\F$ of degree at least 2 are white.
We compute $|L|$ after we describe all the remaining gadgets.


\begin{figure}
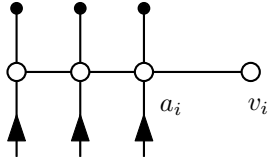

\begin{center}
 \myincludegraphics{img/clause}
\end{center}
\caption{A clause gadget $g(C_i)$ connected to a vertex $v_i$ of a collecting path. \label{fig:clause}}
\end{figure}

\subsection{Clause gadget}
The gadget $g(C_i)$ for a clause $C_i = (L_o \vee L_p \vee L_q)$,
where $1 \le i \le m$ and $L_o, L_p,$ and $L_q$ are literals, is depicted in Figure~\ref{fig:clause}.
The gadget consists of a path on three vertices corresponding to the three literals in the clause.
We call the path the \emph{spine} of the clause gadget.
Each vertex of the spine has one neighbor of degree $1$ and is connected to
the gadget of the corresponding variable by a one-way. 
The vertex of a clause corresponding to a literal $X_i$ 
is connected to a positive output of $g(X_i)$
and the vertex corresponding to a literal $\neg X_i$ is connected to a negative output of $g(X_i)$.
Finally, one vertex of the spine denoted by $a_i$ is connected to a vertex $v_i$
of a collecting path, which is defined later.

\begin{observation}
If one vertex of the spine is black, then all vertices of
the clause gadget get black in the process.
\end{observation}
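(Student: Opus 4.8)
The plan is to trace the threshold process explicitly inside $g(C_i)$, using Observation on one-way gadgets to treat all incoming one-ways as benign (their internal vertices start white and, by the one-way observation, cannot push color into the spine before the relevant output becomes black, so they may be ignored for this direction). First I would recall the structure: the spine is a path $b_1 b_2 b_3$ on three vertices, each $b_j$ carries a pendant vertex of degree $1$ (hence black by Observation~\ref{obs:degone}), and one spine vertex, say $a_i = b_2$ without loss of generality, is additionally adjacent to $v_i$ on the collecting path. The key local fact is that every spine vertex already has one black neighbor for free, namely its pendant leaf.

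The main step is a three-move propagation argument. Suppose some spine vertex $b_j$ is black. If $b_j$ is an endpoint of the spine, its unique spine neighbor $b_{j'}$ now has two black neighbors — its pendant leaf and $b_j$ — so $b_{j'}$ becomes black in the next step; iterating once more blackens the remaining endpoint. If instead the black spine vertex is the middle vertex $b_2$, then both $b_1$ and $b_3$ simultaneously acquire a second black neighbor (their leaf plus $b_2$) and turn black in one step. Either way all of $b_1,b_2,b_3$ are black after at most two steps. Finally, the pendant leaves are already black, and the one-way gadgets attached at the outputs: each such one-way has its \emph{end} at a spine vertex, which is now black, so by part~1 of the one-way observation every one-way gadget is fully converted within three further steps. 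Hence every vertex of $g(C_i)$ becomes black.

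The only point requiring care — and the one I would state most explicitly — is the interaction with the one-ways and with $v_i$: we must check that nothing \emph{outside} the gadget is silently assumed black. This is exactly handled by the degree-$1$ observation (the leaves) and by the hypothesis that one-way internal vertices start white together with part~2 of the one-way observation (they cannot interfere before the spine is black, and once the spine is black they convert in the correct direction). No assumption on $v_i$ is needed, since each spine vertex already reaches threshold $2$ using only its leaf and a black spine neighbor. I expect no real obstacle here; the statement is a direct finite case check, and the proof is short enough that, as the authors remark for the analogous easy parts, one could even omit the details — but I would include the two-case propagation for clarity.
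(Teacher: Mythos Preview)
Your argument is correct, and indeed the paper offers no proof at all for this observation---it is simply stated without justification, in keeping with the authors' remark (for the analogous properties of the variable gadget) that such facts ``are easy to check and hence we skip them.'' Your two-case spine propagation is the obvious and only reasonable verification, so there is nothing to compare.

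One small over-reach worth flagging: the clause gadget, as defined in the paper, consists only of the three spine vertices and their three pendant leaves; the one-way gadgets are separate connecting pieces and their internal vertices are not part of $g(C_i)$. So your final paragraph about converting the one-ways is unnecessary for this particular observation. Moreover, part~1 of the one-way observation only asserts that the \emph{start} acquires a black neighbor when the end is black---it does not literally say all four internal vertices become black (though that does follow from the gadget's structure). None of this affects the correctness of what the observation actually claims; your spine argument alone already establishes it.
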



\begin{figure}
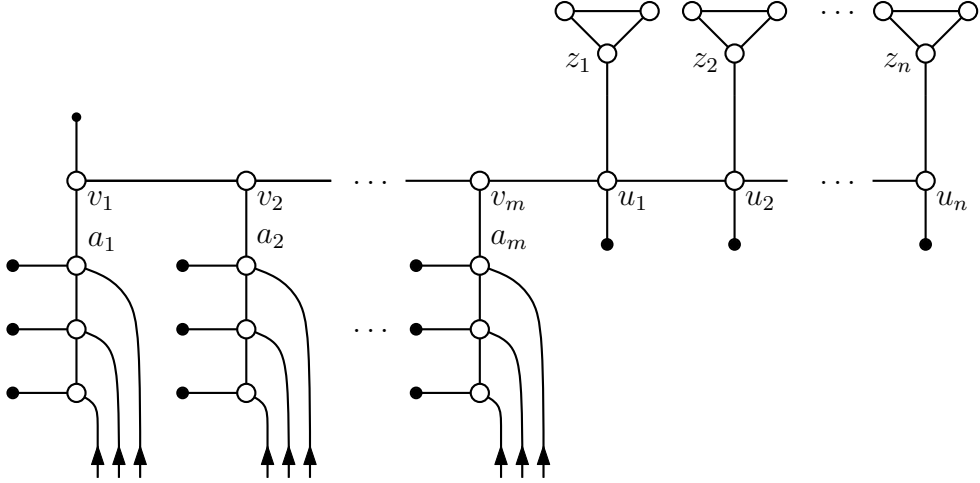

\begin{center}
 \myincludegraphics{img/connection} 
\end{center}
\caption{A collecting path $v_1, \ldots, v_m$ and a distributing path  $u_1, \ldots, u_n$ \label{fig:connection}}
\end{figure}

\subsection{Collecting and distributing gadget}

A \emph{collecting path} is a path on $m$ vertices $v_1, \ldots, v_m$ where each $v_i$ is connected to a clause gadget. Moreover, the vertex $v_1$ is also 
connected to a vertex of degree $1$. 
A \emph{distributing path} is a path on  $n$ vertices $u_{1}, \ldots, u_{n}$.
Each $u_i$ is connected to a vertex of degree $1$ and to the vertex $z_i$
of the variable gadget $g(X_i)$.
Finally, $v_m$ is connected to $u_1$; see Figure~\ref{fig:connection}. See Figure~\ref{fig:example} for an example of the whole graph $G_\F$.

\begin{observation}\label{obs:paths}
If the vertices of the distributing and collecting paths are white at the beginning, they will become all black in the process
if and only if all the clause gadgets get black during the process.
Moreover, for each vertex $v_i$ of the collecting path, it holds that $v_i$ cannot get black before its corresponding clause gadget becomes black.
\end{observation}

\begin{proof}
If all spines of clause gadgets are black then it is easy to observe that
the vertices of the collecting path get black in at most $m$ steps
from $v_1$ to $v_m$.
Once $v_m$ is black all the vertices
of the distributing path get black in at most $n$ steps
from $u_1$ to $u_n$.
It remains to check that $v_i$ cannot get black before 
a neighboring vertex $a_i$ gets black.

We start by checking the vertices of the distributing path. 
By Observation~\ref{obs:variable}(e), the vertex $z_n$ cannot get black before $u_n$.
Thus $u_n$ cannot get black before $u_{n-1}$ because $u_{n-1}$ is one of 
the two remaining neighbors which can be black before $u_n$.
Similarly, for $0 < i < n$, the vertices $z_i$ and $u_{i+1}$ cannot get black before
$u_{i}$. Also $u_1$ cannot get black before $v_m$.

Analogously, no vertex $v_i$, $2 \le i \le m$, of the collecting path 
can get black before $v_{i-1}$ and $a_i$ are both black.
For $i = 1$ we get that $a_1$ must get black before $v_1$.
\end{proof}

The graph $G_{\F}=(V,E)$ corresponding to the 3-SAT instance $\F$ constructed from these gadgets has a linear size in the size of $\cal F$. 
The size of $L$ is $15m+n+1$. 
Thus $s$ is set to $n+|L| = 15m+2n+1$.

\begin{figure}
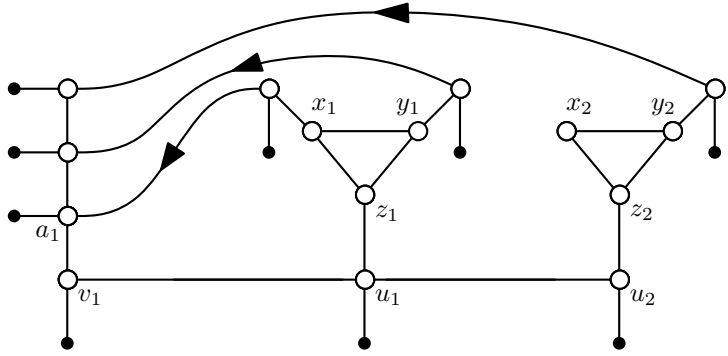

\begin{center}
 \myincludegraphics{img/example} 
\end{center}
\caption{A graph $G_\F$ for the formula 
$\F = (X_1 \vee \neg X_1 \vee \neg X_2)$.
\label{fig:example}
}
\end{figure}

\begin{lemma}
If $\F$ is satisfiable then there exists an irreversible 2-conversion set $S$
of size $n + |L|$ in $G_{\F}$.
\end{lemma}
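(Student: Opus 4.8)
The plan is to give an explicit description of the conversion set $S$ and then verify that the irreversible 2-threshold process started from $S$ eventually blackens all of $G_{\F}$. Fix a satisfying assignment $\varphi$ of $\F$. First I would put into $S$ all of the set $L$ of degree-one vertices (this is forced by Observation~\ref{obs:degone}, and we have $|L| = 15m+n+1$). Then, for each variable $X_i$, add to $S$ the vertex $x_i$ if $\varphi(X_i) = \textup{true}$ and the vertex $y_i$ otherwise; this contributes exactly $n$ vertices. So $|S| = |L| + n = 15m+2n+1 = s$, as required. It remains only to check that this $S$ converts the whole graph.

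The verification proceeds by following the process in the natural order. First, by Observation~\ref{obs:variable}(a), each variable gadget spreads its chosen truth value along the appropriate antenna, so every positive output of $g(X_i)$ becomes black when $\varphi(X_i)=\textup{true}$ and every negative output becomes black when $\varphi(X_i)=\textup{false}$. Next, each such black output is the start of a one-way gadget whose end lies in a clause gadget; by the first property of the one-way gadget, the black start forces a black neighbor on the end vertex inside the clause spine. Since $\varphi$ satisfies every clause $C_i$, at least one literal of $C_i$ is true, so at least one spine vertex of $g(C_i)$ receives a black neighbor from its one-way; combined with that spine vertex's own degree-one black neighbor (which is in $L\subseteq S$), this spine vertex turns black. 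Then by the clause-gadget observation, the entire spine of every clause gadget becomes black. Finally, by (the proof of) Observation~\ref{obs:paths}, once all clause spines are black the collecting path $v_1,\dots,v_m$ blackens from $v_1$ (using its degree-one neighbor plus $a_1$), then $v_m$ forces the distributing path $u_1,\dots,u_n$ to blacken in turn, and each $u_i$ together with the already-black $x_i$ or $y_i$ supplies the second black neighbor of $z_i$, so by Observation~\ref{obs:variable}(b) all remaining vertices of every variable gadget (the third triangle vertex and the unused antenna) turn black. One should also double-check that all internal vertices of every one-way gadget get blackened; this follows because once both the start and the end of a one-way are black, the internal vertices $w_1,\dots,w_4$ acquire two black neighbors and convert.

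The only mildly delicate point is bookkeeping: making sure that the order in which vertices are declared black is actually realizable by the threshold process, i.e.\ that at each stage the newly blackened vertices genuinely have two black neighbors at that time, and that nothing depends circularly on something not yet black. But the propagation here is essentially acyclic — variable gadgets $\to$ outputs $\to$ one-ways $\to$ clause spines $\to$ collecting path $\to$ distributing path $\to$ back into the $z_i$'s and unused antennas — and each step reuses exactly one of the structural observations already proved, so there is no real obstacle; it is a routine walk through the gadgets. I would present it as a short sequence of claims mirroring the list above, citing Observations~\ref{obs:variable}, the clause-gadget observation, the one-way observation, and Observation~\ref{obs:paths} at the corresponding steps, and conclude that every vertex of $G_{\F}$ is black, so $S$ is an irreversible 2-conversion set of size $s$.
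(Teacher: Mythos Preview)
Your proof is correct and follows essentially the same approach as the paper: construct $S$ from the leaves together with $x_i$ or $y_i$ according to a satisfying assignment, then trace the process through variable outputs, clause spines, the collecting and distributing paths, and finally back into the variable gadgets. You are somewhat more explicit than the paper (which asserts in one sentence that every clause gadget acquires a black vertex), and your extra remark that the internal vertices of each one-way gadget eventually turn black once both endpoints are black is a detail the paper omits but which is indeed needed for completeness.
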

\begin{proof}
Let $\sigma: \{X_1, \dots, X_n\}\rightarrow \{\text{true}, \text{false}\}$ be a satisfying assignment of $\F$.
We choose the set $S$ to contain the set $L$ of leaves of $G_{\F}$ and one vertex from each variable gadget; from each gadget $g(X_i)$ we choose $x_i$ if $\sigma(X_i)=\text{true}$ and $y_i$ if $\sigma(X_i)=\text{false}$.
Since $\sigma$ is a satisfying assignment, after a finite number of steps every clause gadget
gets at least one black vertex in its spine. Then in at most two steps
all clause gadgets are completely black. Next the collecting path
gets black in at most $m$ steps and the distributing path gets black 
in another $n$ steps. 
Meanwhile, for each $i\in [n]$, the vertex $z_i$ gets two black neighbors and so it also 
gets black. The remaining white vertex of the pair $x_i, y_i$ gets
black in the next step.
Finally, also the remaining
antennas for every variable get black. 
Hence all vertices of $G_{\F}$ get black in the process.
\end{proof}

\begin{lemma}
If $\F$ is not satisfiable then there is no irreversible 2-conversion set of size $n + |L|$ in $G_{\F}$.
\end{lemma}
\begin{proof}
Assume for contradiction that there exists
an irreversible 2-conversion set $S$ of size $n + |L|$ in $G_{\F}$.  
By Observation~\ref{obs:degone}, we have $L \subseteq S$.
Due to Observation~\ref{obs:variable}(c), each variable gadget contains exactly one non-leaf vertex of $S$. So by Observation~\ref{obs:variable}(d), $S$ must contain exactly one of $\{x_i, y_i\}$ for each $i \in [n]$.
Hence there are no other black vertices.
We derive the truth assignment of the variables in the following way.
We set $X_i$ true if $x_i \in S$ and false if $y_i \in S$.

Let $C = (L_o \vee L_p \vee L_q) $ be a clause of $\F$. The gadget $g(C)$ gets black after
a finite number of steps of the process. 
By Observation~\ref{obs:paths}, $g(C)$ got black because of one of $g(X_o)$, $g(X_p)$
or $g(X_q)$. 
Hence $C$ is evaluated as true in $\F$. 
Therefore all clauses of $\F$ are evaluated as true which is a contradiction
with the assumption that $\F$ is not satisfiable.
\end{proof}

The proof of Theorem~\ref{thm:np} is now finished.


\section{Graphs with maximum degree $3$}\label{section3}

In this section we give a proof of Theorem~\ref{theorem_max3}. We follow the approach of Ueno, Kajitani and Gotoh~\cite{ueno-88}.

Let $G$ be a graph with maximum degree $3$. Without loss of generality, we assume that $G$ is connected, since a minimum irreversible $2$-conversion set can be computed for each component separately.
First we reduce the problem to graphs with minimum degree $2$. Let $H_5$ be the graph with five vertices and seven edges consisting of the cycle $v_1v_2v_3v_4v_5$ and the edges $v_2v_4$ and $v_3v_5$. 
Let $G_2$ be the graph obtained from $G$ by attaching a copy of $H_5$ to each vertex $v$ of $G$ of degree $1$ by identifying $v$ with $v_1$; see Figure~\ref{fig_attaching}. Observe that $G_2$ is a graph with maximum degree $3$ and minimum degree at least $2$.

\begin{figure}
\begin{center}
 \ifpdf\includegraphics[scale=1]{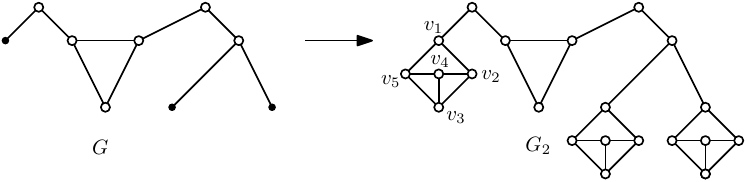}\fi
\end{center}
 \caption{A reduction of the I2CS problem to graphs with all degrees at least $2$.}
 \label{fig_attaching}
\end{figure}

For any graph $H$, let $C_2(H)$ be the minimum size of an irreversible $2$-conversion set in $H$.

\begin{lemma}\label{lemma_privesky}
Let $k$ be the number of vertices of degree $1$ in $G$. Then $C_2(G_2)=C_2(G)+k$.
\end{lemma}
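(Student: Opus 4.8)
The plan is to prove the two inequalities $C_2(G_2)\le C_2(G)+k$ and $C_2(G_2)\ge C_2(G)+k$ separately, in each case by transforming a conversion set for one graph into a conversion set for the other.

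For the upper bound, I would start with an optimal irreversible $2$-conversion set $S$ for $G$, with $|S|=C_2(G)$, and show that $S$ together with one suitably chosen vertex from each attached copy of $H_5$ is an irreversible $2$-conversion set for $G_2$. The key local fact to establish is that the gadget $H_5$ has an irreversible $2$-conversion set of size $1$ \emph{relative to} the attachment vertex $v_1$: there is a single vertex $w$ of $H_5$ (I expect $w=v_3$ or $w=v_4$, one of the vertices sitting on both the $5$-cycle and a chord) such that, once $v_1$ is black, starting the process with $w$ black turns all of $H_5$ black; and conversely that $H_5$ cannot be fully converted from inside without help from $v_1$, but here we only need the positive direction. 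With this, running the process in $G_2$ from $S\cup\{w_v : v\text{ of degree }1\text{ in }G\}$ first converts all of $G$ to black (the presence of the extra $H_5$-edges at a degree-$1$ vertex $v$ only helps, never hinders, since the process is monotone and adding black neighbours can only speed conversion), and then each $H_5$, now with its root $v_1=v$ black and its chosen vertex black, finishes converting. This gives a conversion set of size $C_2(G)+k$.

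For the lower bound, I would take an optimal irreversible $2$-conversion set $T$ for $G_2$ and argue that $|T|\ge C_2(G)+k$. The first step is to observe that each copy of $H_5$ must contribute at least one vertex to $T$: if $T$ contained no vertex of a particular copy $H$, then the only vertex of $H$ with a neighbour outside $H$ is its root $v_1$, so all the non-root vertices of $H$ have all but at most one of their neighbours inside $H$; a short case analysis of the degree-$3$ graph $H_5$ (every non-root vertex has degree $2$ or $3$ within $H_5$, and none of them can ever accumulate two black neighbours if $H$ starts all white, because infection can enter $H$ only through $v_1$ and $v_1$ feeds at most one neighbour) shows $H$ stays white forever, a contradiction. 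Hence $|T\cap V(H)|\ge 1$ for each of the $k$ copies, so $T$ uses at least $k$ vertices inside the copies and at most $|T|-k$ vertices of $G$ itself. The second step is to convert $T$ into a conversion set for $G$ of size $|T|-k$: let $T'=T\cap V(G)$, possibly augmented by replacing, for each copy $H$ with $|T\cap V(H)|=1$ and that single vertex being the root $v_1$, that vertex by $v_1$ itself (it already lies in $V(G)$, so no change), and for copies contributing more than one vertex we can afford to be generous. Running the $G_2$-process and then restricting to $G$: I claim $T'$ converts all of $G$. The point is that the extra neighbours a degree-$1$ vertex $v$ of $G$ receives from its copy $H$ can only become black \emph{after} $v$ itself is black (again because infection enters $H$ only through $v_1=v$), so the $H$-part never helps convert $G$; therefore the restriction to $G$ of the $G_2$-process, started from $T'$, already blackens all of $V(G)$. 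Combining, $C_2(G)\le |T'|\le |T|-k = C_2(G_2)-k$.

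The main obstacle I anticipate is the careful bookkeeping in the lower bound: one must make sure that when a copy $H$ contributes exactly one vertex to $T$ and that vertex is \emph{not} the root, then $T\cap V(G)$ together with nothing from $H$ still suffices for $G$ — which is fine, since the $H$-contribution genuinely cannot help $G$ — and simultaneously that such a copy still gets converted in $G_2$, which is where the specific structure of $H_5$ (enough chords that a single well-placed black vertex plus the root suffices, yet not so few that one internal vertex alone could convert it) is used. Verifying these two opposing properties of $H_5$ by explicit case checking on its five vertices is routine but is the crux; everything else is monotonicity of the bootstrap process and the observation that $v_1$ is the unique cut vertex separating each $H_5$ from the rest.
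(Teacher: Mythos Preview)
Your upper bound is correct and essentially the paper's: since any irreversible $2$-conversion set of $G$ already contains each degree-$1$ vertex $v=v_1$ (Observation~\ref{obs:degone}), adding one further vertex (e.g.\ $v_3$) per attached $H_5$ yields a conversion set for $G_2$ of size $C_2(G)+k$.

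The lower bound, however, has a real gap. Two things go wrong.

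\emph{(i) The bound $|T\cap V(H)|\ge 1$ is too weak, and the ensuing count is off.} The root $v_1$ lies in both $V(H)$ and $V(G)$, so ``$T$ uses at least $k$ vertices inside the copies and at most $|T|-k$ vertices of $G$ itself'' does not follow; these two sets overlap. (Incidentally, $v_1$ has \emph{two} neighbours $v_2,v_5$ in $H_5$, so the parenthetical ``$v_1$ feeds at most one neighbour'' is also inaccurate.) What you actually need is $|T\cap V(H)|\ge 2$: for every single vertex of $H_5$ there is a cycle of $H_5$ avoiding it (e.g.\ $v_3$ is avoided by the $4$-cycle $v_1v_2v_4v_5$), so one vertex can never meet all cycles. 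This is exactly what the paper uses.

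\emph{(ii) The set $T'=T\cap V(G)$ need not convert $G$.} Your key claim, that the $H$-neighbours of $v_1$ can only turn black after $v_1$ does, is false once $T$ contains internal vertices of $H$. For instance $T\cap V(H)=\{v_3,v_4\}$ is perfectly possible (it converts all of $H_5$ by itself), and then $v_1\notin T'$; but $v_1$ has degree $1$ in $G$, so it can never become black there and $T'$ is not a conversion set for $G$.

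The fix is short and is what the paper (tersely) does: set $T''=(T\cap V(G))\cup\{v_1^{(1)},\dots,v_1^{(k)}\}$. Every non-leaf vertex of $G$ has the same neighbours in $G$ and in $G_2$, and every leaf is black in $T''$ from the start, so the $G$-process from $T''$ dominates the $G_2$-process from $T$ on $V(G)$; hence $T''$ converts $G$. Writing $D$ for the set of leaves of $G$, one has $|T''|\le |T\cap(V(G)\setminus D)|+k$, while $|T|=|T\cap(V(G)\setminus D)|+\sum_i|T\cap V(H_i)|\ge |T\cap(V(G)\setminus D)|+2k$ by (i). Subtracting gives $|T|\ge |T''|+k\ge C_2(G)+k$.
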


\begin{proof}
By Observation~\ref{obs:degone}, every irreversible $2$-conversion set in $G$ contains all vertices of degree $1$. Since every irreversible $2$-conversion set in $G_2$ intersects all cycles, it contains at least two vertices in each copy of $H_5$. On the other hand, $\{v_3,v_4\}$ is an irreversible $2$-conversion set of $H_5$. It follows that by attaching a copy of $H_5$ to a vertex of degree $1$, the minimum size of an irreversible $2$-conversion set grows by $1$.
\end{proof}

If $G_2$ is $3$-regular, the I2CS problem is equivalent to finding a vertex feedback set, which can be solved in polynomial time by the result of Ueno, Kajitani and Gotoh~\cite{ueno-88}.

Now we take care of the case when $G_2$ has exactly one or two vertices of degree $2$.
First we consider the special case when the two vertices of degree $2$ are connected by an edge. We subdivide this edge by a new vertex $x$ and add one more vertex $y$ joined to $x$, forming a graph $G_1$. See Figure~\ref{fig_edge}.

\begin{figure}
\begin{center}
 \ifpdf\includegraphics[scale=1]{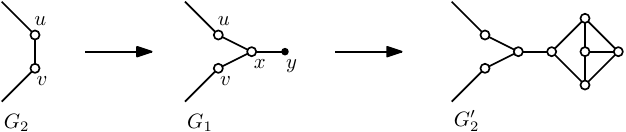}\fi
\end{center}
 \caption{The case of two adjacent vertices of degree $2$.}
 \label{fig_edge}
\end{figure}

\begin{lemma}\label{lemma_two2_joined}
Suppose that $u$ and $v$ are the only vertices of degree $2$ in $G_2$, and that $uv$ is an edge of $G_2$. Let $G_1$ be the graph $(V(G_2)\cup\{x,y\}, (E(G_2)\setminus\{uv\})\cup\{ux,vx,xy\})$. Then $C_2(G_1)=C_2(G_2)+1$.
\end{lemma}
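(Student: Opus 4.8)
The plan is to establish the two inequalities $C_2(G_1)\le C_2(G_2)+1$ and $C_2(G_1)\ge C_2(G_2)+1$ separately, each by transporting a conversion set between $G_1$ and $G_2$ and checking that the associated bootstrap process transports with it. The guiding picture: $y$ has degree $1$, hence lies in every conversion set of $G_1$ by Observation~\ref{obs:degone}, while $x$ is a relay --- once $y$ is black, $x$ turns black as soon as one of $u,v$ does, and thereafter feeds black to both $u$ and $v$. So in $G_1$ the pair $\{x,y\}$ plays the role of the edge $uv$ of $G_2$ together with one extra mandatory seed.

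For the upper bound, let $S$ be a minimum irreversible $2$-conversion set of $G_2$. First, $S\cap\{u,v\}\neq\emptyset$: whichever of $u,v$ turns black first sees its partner still white at that moment, hence has at most one black neighbour (both vertices have degree $2$), so it cannot have turned black by the threshold rule and must be a seed. Put $S':=S\cup\{y\}$, so that $|S'|=C_2(G_2)+1$ (note $y\notin V(G_2)$). In $G_1$ the vertex $x$ is black after one step from $S'$, and one proves by induction on $t$ that $B^{t+1}_{G_1}(S')\cap V(G_2)\supseteq B^{t}_{G_2}(S)$, where $B^{t}(\cdot)$ is the black set after $t$ steps. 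The induction step is a short case analysis on a vertex $p$ newly blackened in $G_2$: if $p\notin\{u,v\}$ its neighbourhoods in $G_1$ and $G_2$ coincide, and if $p\in\{u,v\}$ one uses that $x$ (or a seed in $\{u,v\}$) is already permanently black in $G_1$ in place of the edge $uv$. Hence all of $V(G_2)$, and then also $x$ and $y$, turn black, so $S'$ converts $G_1$.

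For the lower bound, let $S'$ be a minimum irreversible $2$-conversion set of $G_1$; it contains $y$. A ``who turns black first'' argument gives that either $x\in S'$ or $\{u,v\}\cap S'\neq\emptyset$: if $x\notin S'$, then $x$ can turn black only after some vertex of $\{u,v\}$ does, but that vertex --- of degree $2$ with $x$ as a neighbour --- cannot have turned black by the rule while $x$ is still white, so it is a seed. In the first case set $S:=(S'\setminus\{x,y\})\cup\{u\}$, in the second (say $u\in S'$) set $S:=S'\setminus\{y\}$; in both cases $u\in S$ and $|S|\le C_2(G_1)-1$. The analogous induction, now showing $B^{t}_{G_1}(S')\cap V(G_2)\subseteq B^{t}_{G_2}(S)$, yields that $S$ converts $G_2$.

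The routine parts are the two inductions, whose steps are the case analyses sketched above. The point that needs real care --- and which I expect to be the main obstacle --- is nailing down the structural facts initialising them: that every conversion set of $G_2$ meets $\{u,v\}$, and that every conversion set of $G_1$ meets $\{x,u,v\}$ with a vertex left available for reuse in $G_2$. Both reduce to the observation that the first of $u,v$ to turn black (respectively, the first of $u,v$ to do so, which must precede $x$) cannot have been blackened by the threshold rule, and this is precisely where the hypotheses that $u,v$ are the only degree-$2$ vertices and that $uv\in E(G_2)$ enter.
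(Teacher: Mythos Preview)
Your proof is correct and follows essentially the same approach as the paper's: bound $C_2(G_1)\le C_2(G_2)+1$ by adjoining $y$ to a minimum set for $G_2$, and bound $C_2(G_1)\ge C_2(G_2)+1$ by stripping $y$ (and possibly $x$) from a minimum set for $G_1$, in both directions relying on the fact that any conversion set for $G_2$ must hit $\{u,v\}$ and any conversion set for $G_1$ must hit $\{x,u,v\}$. The only cosmetic differences are that you spell out the process comparison as an explicit time-indexed induction, whereas the paper argues informally that the process on $V(G_2)$ inside $G_1$ is ``identical'' or ``induced'' once $x$ is black; and for the lower bound you handle the case $x\in S'$ by a separate branch, while the paper first normalises to $x\notin S'$ by swapping $x$ for one of $u,v$.
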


\begin{proof}
If $S$ is an irreversible $2$-conversion set in $G_2$, then $S$ contains at least one of the vertices $u,v$. We claim that $S\cup\{y\}$ is an irreversible $2$-conversion set in $G_1$. This is clear if both $u$ and $v$ are in $S$. If exactly one of the vertices $u,v$ is in $S$, say, $u\in S$ and $v\notin S$, then $x$ turns black in the first step. Therefore, it is sufficient to show that $S\cup \{x,y\}$ is an irreversible $2$-conversion set in $G_1$. But this follows since in this case, the irreversible $2$-conversion process on the subset $V(G_2)$ is identical to the process on $G_2$ starting with $S$ black. 

Conversely, let $S'$ be an irreversible $2$-conversion set in $G_1$. Then necessarily $y \in S'$. We may assume that $x\notin S'$, otherwise we may replace $x$ by $u$ or $v$, or remove $x$ from $S'$ if both $u$ and $v$ are in $S'$. We claim that $S'\setminus\{y\}$ is an irreversible $2$-conversion set in $G_2$. Clearly, at least one of the vertices $u,v$, say, $u$, is in $S'$. If also $v\in S'$, the claim follows immediately. If $v\notin S'$, then during the irreversible $2$-conversion process on $G_1$, the vertex $v$ turns black only after its neighbor in $G_2$ other than $u$ turns black. Therefore, the irreversible $2$-conversion process on $G_2$ starting with $S'\setminus\{y\}$ black will be induced by the process on $G_1$ starting with $S'$ black.
\end{proof}

Modifying $G_1$ like in Lemma~\ref{lemma_privesky}, that is, by attaching a copy of $H_5$ to $y$, we obtain a graph $G'_2$ with two nonadjacent vertices of degree $2$ and all other vertices of degree $3$, satisfying $C_2(G'_2)=C_2(G_1)+1=C_2(G_2)+2$. 

Now we consider the case of two nonadjacent vertices of degree $2$.

\begin{lemma}\label{lemma_two2}
Suppose that $u$ and $v$ are the only vertices of degree $2$ in $G_2$ and that $u$ and $v$ are not adjacent. Then the graph $G_3$ obtained from $G_2$ by adding the edge $uv$ satisfies $C_2(G_3)=C_2(G_2)$. In particular, every irreversible $2$-conversion set in $G_3$ is also an irreversible $2$-conversion set in $G_2$. 
\end{lemma}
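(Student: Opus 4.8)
\textbf{Proof proposal for Lemma~\ref{lemma_two2}.}
The plan is to show the two directions by comparing the bootstrap processes on $G_2$ and $G_3$, exploiting that adding the edge $uv$ is the \emph{only} difference and that $u,v$ are exactly the degree-$2$ vertices. First I would prove that every irreversible $2$-conversion set $S$ in $G_2$ is also one in $G_3$: this is immediate, since the process on $G_3$ dominates the process on $G_2$ (adding an edge only helps vertices turn black), so $C_2(G_3)\le C_2(G_2)$. The substance is the reverse inequality together with the stronger ``in particular'' claim.

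For the reverse direction, let $S$ be an irreversible $2$-conversion set in $G_3$; I want to show $S$ already percolates in $G_2$. The key observation is that the new edge $uv$ can be ``used'' by the process on $G_3$ in only a very limited way, because $u$ and $v$ each have degree $2$ in $G_2$ (hence degree $3$ in $G_3$, with $uv$ being one of the three edges at each). The argument I would run is a minimal-counterexample / first-divergence argument: run the two processes in parallel and look at the first time step $t$ at which the black set in $G_3$ strictly contains the black set in $G_2$. At that moment some vertex $w$ turns black in $G_3$ but not in $G_2$, and the only way this can happen is if the edge $uv$ is among the two black neighbors of $w$ that triggered it; so $w\in\{u,v\}$, say $w=u$, and its two black neighbors in $G_3$ are $v$ and exactly one $G_2$-neighbor $p$ of $u$. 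But then in $G_2$ the vertex $u$ has degree $2$ with neighbors $p$ (black) and $q$; and for $u$ to have turned black earlier in $G_3$ than in $G_2$ we would need $v$ to already be black while $q$ is not. Now I track back: for $v$ to be black at time $t$ in $G_3$, either $v\in S$ or $v$ turned black via two black neighbors in $G_3$; if one of those was $u$, we get a contradiction with $t$ being the first divergence time (since symmetrically $u$ black would require $v$ black, circularly, with neither in $S$), and if both were $G_2$-neighbors of $v$, then $v$ was black in $G_2$ too by time $t$. So we reach the situation that, in $G_2$ at time $t$, both $v$ and $p$ are black while $q$ is (possibly) not, and $u$ is the unique vertex lagging behind. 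The point is that $u$ then has the two black neighbors it needs \emph{in $G_2$} as soon as we allow it: in $G_2$, $u$'s neighbors are $p$ and $q$ (the edge $uv$ is absent), so $v$ being black does not help; but the process on $G_2$ is not stuck, because $S$ percolates $G_3$, and in $G_3$ the vertex $q$ must still eventually turn black through neighbors other than $u$ — and those same neighbors are present in $G_2$. Thus a clean way to finish is: show that in $G_3$ the vertex $q$ becomes black without ever relying on $u$ as one of its two triggering neighbors (because if it did, we could again invoke the first-divergence minimality to derive a cycle of dependencies $u\to v\to\cdots$ that is never seeded), hence $q$ becomes black in $G_2$ as well, and then $u$ becomes black in $G_2$ via $p$ and $q$; after that the two black sets coincide and the rest of $G_2$ percolates exactly as $G_3$ does.

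Concretely, the clean statement to isolate and prove is: \emph{in the process on $G_3$ starting from $S$, neither of the two ``triggering'' black neighbors that turn $u$ black is $v$, and symmetrically for $v$} — unless one of $u,v$ is in $S$, in which case there is nothing to prove for that vertex. Granting this, the edge $uv$ is effectively never used, so the two processes are literally identical and $S\setminus$(nothing) percolates $G_2$; this gives $C_2(G_2)\le C_2(G_3)$ and the ``in particular'' part at once. To prove that isolated statement, suppose $v$ is one of the two neighbors that first turns $u$ black at time $t$ (and $u\notin S$); then at time $t-1$, $v$ is black but $u$ is not. Among $v$'s neighbors in $G_3$ is $u$, which is white at time $t-1$, so $v$ was turned black by two neighbors other than $u$, i.e.\ by its two $G_2$-neighbors — but wait, one of those could be reached only via $u$ at an earlier stage, so one iterates. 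Since each iteration strictly decreases the first time a ``$uv$-dependent'' blackening occurred, and the process starts from the fixed set $S$ at time $0$ (where no blackening has occurred), the iteration bottoms out, forcing one of $u,v$ into $S$ — contradiction.

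\textbf{Main obstacle.} The genuinely delicate point is the circular-dependency argument: ruling out the scenario where $u$ becomes black ``because of $v$'' while $v$ becomes black ``because of $u$,'' possibly with a long chain of intermediate vertices whose blackening also traces back through the $uv$ edge. Making the minimal-time (first-divergence) induction airtight — choosing the right quantity to induct on so that every appeal to the edge $uv$ can be pushed strictly earlier in time until it collapses into the seed set $S$ — is where the real care is needed; once that lemma is in hand, both inequalities and the ``in particular'' clause are immediate.
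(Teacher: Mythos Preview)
Your ``clean isolated statement'' is false, and this is where the approach breaks. Take $u,v\notin S$ and suppose $v$ turns black strictly before $u$ in the $G_3$-process. Your final paragraph correctly shows that $v$ is then triggered by its two $G_2$-neighbours $r,s$, not by $u$. But nothing prevents $u$ from subsequently being triggered by $v$ together with one $G_2$-neighbour $p$, while the other $G_2$-neighbour $q$ is still white. This is not a circular dependency --- it is a perfectly consistent one-directional use of the edge $uv$ --- so your backward-in-time iteration has nothing to act on: it terminates after one step, having established only the (true) half of the statement concerning $v$. Concretely, one can build a small $G_2$ with $S=\{p,r,s\}$ in which $v$ turns black at time $1$ via $r,s$ and $u$ turns black at time $2$ via $p$ and $v$; the two processes genuinely diverge and are not ``literally identical''.

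Your earlier, more tentative target --- that $q$ nevertheless becomes black in $G_2$ --- is the correct thing to aim for, but the reason you give (``in $G_3$ the vertex $q$ becomes black without relying on $u$'') can also fail: once $u$ is black in $G_3$, $q$ may well be triggered by $u$ together with one other neighbour, and this can cascade further along a path $u,q,b,\dots$ of vertices each triggered in $G_3$ with the help of its predecessor. A time-based induction does not untangle this forward cascade.

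The missing ingredient is structural and specific to this setting: $G_3$ is $3$-regular, so for any irreversible $2$-conversion set $S$ the graph $G_3-S$ is a \emph{forest} (a white cycle could never be converted). The paper's proof uses exactly this. Each white component is a tree $T$; if $uv\in E(T)$ then $T-uv$ is two subtrees, and in $G_2$ each can be peeled from its leaves inward, with $u$ (respectively $v$) converted last since it only needs its two $G_2$-neighbours. If instead $u\in T$ but $v\notin T$, the same peeling converts $T$ with $u$ last. This static argument replaces the dynamic chase and is what dissolves the forward-cascade problem your approach runs into.
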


\begin{proof}
The inequality $C_2(G_2)\ge C_2(G_3)$ follows from the fact that $G_2$ is a subgraph of $G_3$. For the other inequality, suppose that $S$ is an irreversible $2$-conversion set in $G_3$. We show that then $S$ is also an irreversible $2$-conversion set in $G_2$. Every component of $G_3-S$ is a tree. In the beginning, the vertices of $S$ are black and the other vertices are white. In each step, the irreversible $2$-threshold process converts all isolated vertices and all leaves of the white subgraph of $G_3$ to black vertices.
If $w$ is an isolated vertex in $G_3-S$, $w$ has still at least two black neighbors in $G_2$, so it is converted to a black vertex in the first step.

Let $T$ be a tree component $T$ of $G_3-S$ with at least two vertices. If $uv$ is an edge of $T$, the two components of $T-uv$ will still be converted to black vertices in $G_2$, with $u$ and $v$ being the last vertices to be converted. If $u\in T$ and $v\notin T$, then all vertices of $T$ will still be converted to black vertices, with $u$ being the last vertex to be converted.
\end{proof}

We note that we could use Lemma~\ref{lemma_two2} also in the case when $uv$ is an edge, if we allowed multigraphs. However, we have decided not to use multigraphs in this paper.

The case of exactly one vertex of degree $2$ can be easily solved using Lemma~\ref{lemma_two2} by taking two disjoint copies of $G_2$.

\begin{corollary}
Suppose that $v$ is the only vertex of degree $2$ in $G_2$. Then the graph $G_3$ obtained from $G_2$ by adding a disjoint graph $G'_2$ isomorphic to $G_2$ and joining the vertex $v'$ of degree $2$ in $G'_2$ with $v$ by an edge is $3$-regular and satisfies $C_2(G_3)=2C_2(G_2)$.
\end{corollary}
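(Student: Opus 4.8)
The plan is to obtain the statement as a direct instance of Lemma~\ref{lemma_two2}. Let $H$ be the graph consisting of two vertex-disjoint copies of $G_2$, namely $G_2$ itself together with the isomorphic copy $G_2'$. Since $v$ is the only vertex of degree $2$ in $G_2$ and $v'$ is the only one in $G_2'$, the graph $H$ has exactly two vertices of degree $2$, namely $v$ and $v'$, and these are nonadjacent because they lie in different connected components; every other vertex of $H$ has degree $3$. The graph $G_3$ is precisely $H$ with the edge $vv'$ added, so $v$ and $v'$ acquire degree $3$ while all other degrees stay equal to $3$; hence $G_3$ is $3$-regular. Applying Lemma~\ref{lemma_two2} to $H$ then gives $C_2(G_3)=C_2(H)$.

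It remains to evaluate $C_2(H)$. Because the irreversible $2$-threshold process runs independently in the two connected components of $H$, a set $S\subseteq V(H)$ is an irreversible $2$-conversion set of $H$ if and only if $S\cap V(G_2)$ and $S\cap V(G_2')$ are irreversible $2$-conversion sets of $G_2$ and of $G_2'$ respectively; this is the same remark used at the beginning of the section to reduce to connected graphs. Hence $C_2(H)=C_2(G_2)+C_2(G_2')=2\,C_2(G_2)$, and combining this with $C_2(G_3)=C_2(H)$ yields $C_2(G_3)=2\,C_2(G_2)$, as claimed.

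The only point that needs care is that Lemma~\ref{lemma_two2} is stated for the (connected) graph $G_2$, whereas here it is being invoked for the disconnected graph $H$. I expect this to be a non-issue rather than a real obstacle: the proof of Lemma~\ref{lemma_two2} only argues about the components of $G_3-S$ and about the order in which vertices turn black, and it nowhere uses connectivity of the host graph, so it applies verbatim; alternatively one simply re-runs its two directions for $H$, which is immediate. Finally, with $G_3$ now $3$-regular, the result of Ueno, Kajitani and Gotoh~\cite{ueno-88} computes $C_2(G_3)$, and therefore $C_2(G_2)=C_2(G_3)/2$, in polynomial time, which (together with Lemmas~\ref{lemma_privesky}--\ref{lemma_two2}) settles the remaining case of Theorem~\ref{theorem_max3}.
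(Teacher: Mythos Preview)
Your proof is correct and follows exactly the route the paper intends: the paper states the corollary immediately after remarking that the one-vertex-of-degree-$2$ case ``can be easily solved using Lemma~\ref{lemma_two2} by taking two disjoint copies of $G_2$,'' and gives no further proof. Your write-up fills in precisely these details, including the component-wise additivity of $C_2$ and the observation that the proof of Lemma~\ref{lemma_two2} nowhere uses connectivity of the host graph, so it applies to the disjoint union $H$.
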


We are left with the case when $G_2$ has at least $k\ge 3$ vertices of degree $2$. In this case, we construct a $3$-regular graph $G_3$ by attaching a caterpillar $T$ with $k$ leaves and $k-2$ vertices of degree $3$ forming the spine; see Figure~\ref{fig_caterpillar}. Every leaf of $T$ is identified with one vertex of degree $2$ in $G_2$. Let $V_2$ be the vertex set of $G_2$ and let $V_3$ be the vertex set of $G_3$. The graph $G_3-V_2$ is a path induced by the $k-2$ branching nodes of $T$.

\begin{figure}
\begin{center}
 \ifpdf\includegraphics[scale=1]{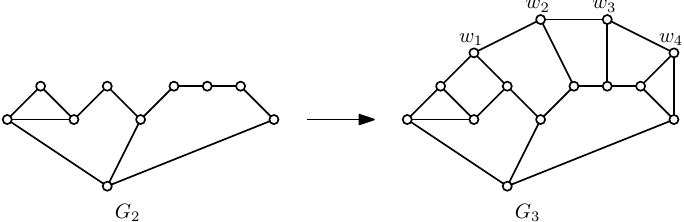}\fi
\end{center}
 \caption{Extending $G_2$ to a $3$-regular graph $G_3$ by attaching a caterpillar.}
 \label{fig_caterpillar}
\end{figure}

Let $\mu(G)$ be the \emph{cyclomatic number} of $G$. That is, $\mu(G)=e(G)-v(G)+\kappa(G)$, where $e(G)$, $v(G)$ and $\kappa(G)$ are the numbers of edges, vertices and components of $G$, respectively.

Define a function $f:2^{V_3} \rightarrow \mathbb{Z}$ from the set of subsets of vertices of $G_3$ as $f(X)\mathrel{\mathop:}=\mu(G_3)-\mu(G_3-X)$. Roughly speaking, $f$ measures the number of cycles broken by $X$ in $G_3$.
Let $f_2:2^{V_2} \rightarrow \mathbb{Z}$ be the restriction of $f$ to subsets of $V_2$.
Ueno, Kajitani and Gotoh~\cite{ueno-88} proved that $(V_3,f)$ is a linear $2$-polymatroid, using a linear representation of the dual matroid of the graphic matroid of $G_3$. More precisely, each vertex $v$ of $G_3$ can be represented as a $2$-dimensional subspace $h(v)$ of a certain vector space (over any field, and of dimension not exceeding the number of vertices of $G_3$) so that for each $X\subset 2^{V_3}$, the value $f(X)$ is equal to the dimension of the span of $\bigcup\{h(v); v\in X\}$. The function $f$ is called the \emph{rank function} of the $2$-polymatroid.
Since $f_2$ is a restriction of $f$, it follows that $(V_2,f_2)$ is also a linear $2$-polymatroid.

A set $M$ is a \emph{matching} in a $2$-polymatroid with rank function $f$ if $f(M)=2|M|$. A set $S$ is \emph{spanning} in a $2$-polymatroid $(V,f)$ if $f(S)=f(V)$. Let $\nu(V,f)$ be the maximum size of a matching in $(V,f)$ and let $\rho(V,f)$ be the minimum size of a spanning set of $(V,f)$.
Lov\'asz~\cite{lov81,LP86_matching_theory} proved the following generalization of Gallai's identity.

\begin{lemma}[{\cite{lov81}},{\cite[Lemma 11.1.1.]{LP86_matching_theory}}]
\label{lemma_gallai}
For every $2$-polymatroid $(V,f)$, we have $\nu(V,f)+\rho(V,f)=f(V)$.
\end{lemma}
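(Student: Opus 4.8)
The plan is to prove the two inequalities $\nu(V,f)+\rho(V,f)\le f(V)$ and $\nu(V,f)+\rho(V,f)\ge f(V)$ separately, in each case by a greedy exchange argument that uses only the defining properties of a $2$-polymatroid: $f(\emptyset)=0$, monotonicity, submodularity, and $f(\{v\})\le 2$ for all $v$. Abbreviate $\nu:=\nu(V,f)$ and $\rho:=\rho(V,f)$. Two elementary consequences of the axioms will be used throughout. First, subadditivity gives $f(Y)\le 2|Y|$ for every $Y\subseteq V$, so $M$ is a matching exactly when it meets this bound, and every marginal $f(X\cup\{v\})-f(X)$ (for $v\notin X$) lies in $\{0,1,2\}$. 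Second, submodularity makes the marginal $f(X\cup\{v\})-f(X)$ non-increasing as $X$ grows; in particular, if $f(X)<f(V)$ then some $v\notin X$ has a positive marginal over $X$, since otherwise all marginals over $X$ would be zero and $f(V)=f(X)$.

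For $\nu+\rho\le f(V)$, I would start from a maximum matching $M$ (so $f(M)=2\nu$) and greedily enlarge it: while the current set $X\supseteq M$ is not spanning, adjoin some $v\notin X$ with positive marginal over $X$. The main --- and essentially the only --- subtlety is the claim that every marginal encountered in this process equals \emph{exactly} $1$. It is at least $1$ by the choice of $v$; and if it were $2$, then because $M\subseteq X$ and $v\notin X$, the non-increasing-marginals property would give $f(M\cup\{v\})-f(M)\ge 2$, hence $f(M\cup\{v\})=2(\nu+1)=2|M\cup\{v\}|$, so $M\cup\{v\}$ would be a matching of size $\nu+1$, contradicting maximality of $M$. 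Therefore the process ends at a spanning set obtained from $M$ by adjoining exactly $f(V)-2\nu$ elements, which yields $\rho\le \nu+(f(V)-2\nu)=f(V)-\nu$, that is, $\nu+\rho\le f(V)$.

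For $\nu+\rho\ge f(V)$, I would dually start from a minimum spanning set $S$ (so $f(S)=f(V)$ and $|S|=\rho$) and greedily build a matching inside it: while some $v\in S\setminus M$ has $f(M\cup\{v\})-f(M)=2$, adjoin it to the current matching $M\subseteq S$. An easy induction shows $f(M)=2|M|$ throughout, so when the process halts $M$ is a matching and $|M|\le\nu$, while every $v\in S\setminus M$ then has marginal at most $1$ over $M$. Ordering $S\setminus M$ arbitrarily and telescoping the marginals along the chain from $M$ up to $S$, the non-increasing-marginals property gives $f(S)-f(M)\le|S\setminus M|$, i.e.\ $f(V)\le 2|M|+(\rho-|M|)=|M|+\rho\le\nu+\rho$. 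Combining the two inequalities yields the identity. Note that the argument never uses the linearity of $(V,f)$ --- only the abstract $2$-polymatroid axioms --- and that it is the direct analogue of Gallai's identity $\nu+\rho=n$ for a graph on $n$ vertices with no isolated vertices.
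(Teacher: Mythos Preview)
Your proof is correct. The paper does not actually prove this lemma; it merely cites it from Lov\'asz~\cite{lov81} and Lov\'asz--Plummer~\cite{LP86_matching_theory}. Your argument is precisely the standard proof found in those references: the two greedy constructions (extending a maximum matching to a spanning set with unit marginals, and extracting a maximal matching from a minimum spanning set) together with the diminishing-marginals consequence of submodularity yield both inequalities. So there is nothing to compare --- you have supplied the missing proof, and it is the canonical one.
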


Lov\'asz~\cite{lov81} proved that a maximum matching in a linear $2$-polymatroid can be found in polynomial time. It follows that also $\rho(V,f)$ can be computed in polynomial time, for any linear $2$-polymatroid $(V,f)$.

The theorem now follows from the following fact, generalizing~\cite[Theorem 3]{ueno-88}.

\begin{lemma}
\label{lemma_V2}
A set $S\subseteq V_2$ is a spanning set in $(V_2,f_2)$ if and only if it is an irreversible $2$-conversion set in $G_2$.
\end{lemma}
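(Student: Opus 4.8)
The plan is to relate the irreversible $2$-conversion process on $G_2$ to the combinatorics of cycles in $G_3$, exploiting the fact that $G_3 - V_2$ is a path $P$ (the spine of the caterpillar) and that each spine vertex has exactly one neighbor outside $P$ which lies in $V_2$. First I would observe that $f_2(S) = \mu(G_3) - \mu(G_3 - S)$ for $S \subseteq V_2$, and that since $G_3$ is connected and $G_3 - S$ may be disconnected, $f_2(S) = f_2(V_2)$ precisely when $G_3 - S$ is a forest, i.e. when $S$ meets every cycle of $G_3$. (Note $f_2(V_2) = \mu(G_3)$ because $G_3 - V_2 = P$ is a path, hence acyclic.) So the statement reduces to: $S \subseteq V_2$ hits every cycle of $G_3$ if and only if $S$ is an irreversible $2$-conversion set in $G_2$.

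For the forward direction, suppose $S$ meets every cycle of $G_3$. I would first argue that $S$ also meets every cycle of $G_2$ (any cycle in $G_2$ is a cycle in $G_3$), so every component of $G_2 - S$ is a tree. Then I would run the argument from the proof of Lemma~\ref{lemma_two2}: in $G_2$, leaves and isolated vertices of the white subgraph have at most one white neighbor, hence at least two black neighbors (using that $G_2$ has minimum degree $\ge 2$ and every vertex of degree $2$ in $G_2$ gained a neighbor on the spine — but that neighbor is in $V_3 \setminus V_2$, not black initially, so I must be careful here). The key point is that the vertices of $V_2$ which have degree $2$ in $G_2$ have degree $3$ in $G_3$; their third neighbor is a spine vertex. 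Since $S \subseteq V_2$, spine vertices start white, and I need to show the white forest on $V_2$ still collapses. The cleanest way: show that in $G_3$, starting from $S$, the whole graph turns black (this uses that $S$ hits every cycle of $G_3$ and that $G_3$ is $3$-regular, so $G_3 - S$ is a forest with all leaves of degree $\le 1$ in the white graph and thus $\ge 2$ black neighbors — the Ueno–Kajitani–Gotoh characterization for $3$-regular graphs, already available). Then restrict: the process on $G_3$ blackens all of $V_2$, and one checks the spine vertices are never needed to blacken $V_2$ vertices before those $V_2$ vertices would have been blackened anyway — because a spine vertex $p$ has two spine neighbors and one $V_2$-neighbor $w$, so $p$ turns black only after $w$ (or after a long spine chain originating past the end, but the spine ends at vertices of degree… — this needs the caterpillar's leaf structure). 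Hence the induced process on $G_2$ from $S$ already blackens $V_2$.

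For the converse, suppose $S \subseteq V_2$ is an irreversible $2$-conversion set in $G_2$; I want $S$ to hit every cycle of $G_3$. A cycle entirely inside $G_2$ is hit because an irreversible $2$-conversion set must meet every cycle (a cycle disjoint from $S$ has every vertex with $\le 2$ neighbors, at most one black at any time it's still white — standard, and $G_2$ has no vertex of degree $1$ so this is exactly the degree-$\le 2$ obstruction). A cycle using spine vertices enters and leaves $P$ through $V_2$-neighbors of spine vertices; since the caterpillar is a tree, any cycle of $G_3$ through the spine must use at least two "legs" (edges from spine to $V_2$) and a subpath of the spine — so it contains at least two vertices of $V_2$ adjacent into the caterpillar, and I claim $S$ must contain one of them. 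If not, consider those "degree-$2$-in-$G_2$" vertices on the cycle: in $G_2$ they have only two neighbors, both potentially on the cycle's $G_2$-part, so the same degree-$\le 2$ argument along the $G_2$-portion of the cycle (which, together with the understanding that the spine legs give no help since the spine starts white and stays white until a leg vertex blackens) shows these vertices never blacken — contradicting that $S$ is a conversion set.

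The main obstacle I anticipate is the bookkeeping in the forward direction: precisely showing that the spine vertices of the caterpillar are never "useful" in the $G_3$-process before their unique $V_2$-neighbor is already black, so that the $G_3$-process restricted to $V_2$ coincides with (or dominates) the $G_2$-process. This requires understanding the order in which the caterpillar's spine blackens — it can only blacken from its endpoints inward or from a leg whose $V_2$-endpoint is black — and checking that in every scenario the $V_2$ vertices get blackened using only $G_2$-edges. I would handle this by an explicit case analysis on the caterpillar structure (each spine vertex has two spine neighbors and one leg), mirroring the one-directional propagation arguments already used in Observation~\ref{obs:paths} and Lemma~\ref{lemma_two2}.
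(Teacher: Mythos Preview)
Your reduction of the lemma to ``$S \subseteq V_2$ meets every cycle of $G_3$ if and only if $S$ is an irreversible $2$-conversion set in $G_2$'' is exactly the paper's starting point. Both of your directions, however, diverge from the paper, and one of them has a genuine gap.

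In your converse (conversion set in $G_2$ $\Rightarrow$ $S$ hits all $G_3$-cycles), the claim that $S$ must contain one of the \emph{leg} vertices of a spine-using cycle $C$ is false: $S$ may meet $C$ only at an interior (degree-$3$) vertex of $V_2$. What is true, and what your argument is reaching for, is that if $S \cap C = \emptyset$ then every vertex of $C \cap V_2$ has at most one $G_2$-neighbour outside $C$, hence needs a neighbour inside $C\cap V_2$ to blacken first, so all of $C \cap V_2$ deadlocks in $G_2$. Once stated this way the argument is fine. The paper bypasses the cycle-chasing with a one-line monotonicity argument: since $G_3[V_2]=G_2$, the $G_3$-process on $V_2$ dominates the $G_2$-process, so $S$ converts $V_2$ in $G_3$, then the spine falls, and hence $S$ is a feedback vertex set in the $3$-regular $G_3$.

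The real gap is in your forward direction (feedback set in $G_3$ $\Rightarrow$ conversion set in $G_2$). Your key claim, that a spine vertex is never useful before its ``unique'' leg blackens, is wrong on two counts: the spine endpoints have two legs, not one; and when the two blackening waves from the spine endpoints meet at a middle vertex $w_i$, that $w_i$ turns black via its two spine neighbours while its leg $c_i$ is still white, after which $c_i$ can blacken in $G_3$ using $w_i$ and just one of its $G_2$-neighbours. You are then left to show that $c_i$ also blackens in the $G_2$-process, which is precisely the global statement you are trying to prove, so the case analysis you sketch becomes circular. The paper avoids this entirely by a short induction: starting from ``$S$ converts the $3$-regular $G_3$'', it deletes one edge at a spine endpoint via Lemma~\ref{lemma_two2} (so that $w_{k-2}$ becomes a white degree-$2$ vertex), and then repeatedly applies Observation~\ref{obs_del_deg2} (removing a white degree-$2$ vertex preserves the conversion property) to strip off $w_{k-2}, w_{k-3}, \dots, w_1$, arriving at $G_2$.
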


To prove the lemma, we use the following simple observation.

\begin{observation}
\label{obs_del_deg2}
Let $S$ be an irreversible $2$-conversion set in a graph $G$. Let $v$ be a vertex of $G$ of degree $2$ such that $v\notin S$. Then $S$ is an irreversible $2$-conversion set in $G-v$.
\qed
\end{observation}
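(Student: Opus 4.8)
The plan is to show that deleting $v$ does not disturb the conversion process at all: the run of the irreversible $2$-threshold process on $G-v$ started from $S$ still reaches every vertex. Let $a$ and $b$ be the two (distinct, since $G$ is simple) neighbors of $v$ in $G$. Since $S$ is an irreversible $2$-conversion set in $G$ and $v\notin S$, the vertex $v$ turns black during the process on $G$ at some finite time $T\ge 1$; that is, $v$ is white at all times $0,1,\dots,T-1$ and black at time $T$. Because $v$ has only the two neighbors $a$ and $b$, the threshold rule forces both $a$ and $b$ to be black already at time $T-1$. Hence in the process on $G$ the vertex $v$ is still white when $a$ (resp. $b$) is converted, so $v$ is never used as one of the two black neighbors witnessing the conversion of $a$ or of $b$; and $v$ is a neighbor of no other vertex.

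Next I would make this precise by an induction on time. Let $B_t$ be the set of black vertices after $t$ steps of the process on $G$ started from $S$, and let $R_t$ be the set of black vertices after $t$ steps of the process on $G-v$ started from $S$. I claim $R_t\supseteq B_t\setminus\{v\}$ for every $t\ge 0$. The base case is $R_0=S=B_0\setminus\{v\}$. For the inductive step, take $w\in B_{t+1}\setminus\{v\}$. If $w\in B_t$, then $w\in B_t\setminus\{v\}\subseteq R_t\subseteq R_{t+1}$ by the inductive hypothesis and monotonicity. Otherwise $w$ has at least two neighbors in $B_t$ within $G$. If $w\notin\{a,b\}$, then $v\notin N_G(w)$ and $N_{G-v}(w)=N_G(w)$, so these two black neighbors lie in $B_t\setminus\{v\}\subseteq R_t$ and $w\in R_{t+1}$. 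If $w\in\{a,b\}$, say $w=a$: were $v$ one of the two black neighbors, we would have $v\in B_t$, hence $t\ge T$, hence $a\in B_{T-1}\subseteq B_t$ by the first paragraph and monotonicity, contradicting $a\notin B_t$; so again both black neighbors of $w$ lie in $B_t\setminus\{v\}\subseteq R_t$ and $w\in R_{t+1}$. This proves the claim.

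Finally, since $S$ is an irreversible $2$-conversion set in $G$, there is a time $t^{*}$ with $B_{t^{*}}=V(G)$; the claim then gives $R_{t^{*}}\supseteq V(G)\setminus\{v\}=V(G-v)$, so $S$ is an irreversible $2$-conversion set in $G-v$. The only delicate point is the case $w\in\{a,b\}$ in the induction, where the neighborhood genuinely changes upon deleting $v$; the observation that $v$ is converted strictly later than both $a$ and $b$ is exactly what rules out the bad subcase, and it is the heart of the argument.
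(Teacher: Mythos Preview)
Your proof is correct. The paper itself omits the proof entirely (it marks the observation with \qed\ and moves on), so your argument is a careful elaboration of what the authors regard as immediate: since $v\notin S$ has only the two neighbors $a,b$, it turns black strictly after both of them, hence never serves as a witness for any conversion, and the induction $R_t\supseteq B_t\setminus\{v\}$ goes through. One minor phrasing issue: when you write ``were $v$ one of the two black neighbors'' in the case $w=a$, there could in principle be more than two black neighbors of $a$ in $B_t$; the clean statement is simply that $v\notin B_t$ (else $t\ge T$ and $a\in B_{T-1}\subseteq B_t$, contradiction), so all of $a$'s black neighbors in $B_t$ survive to $G-v$. This does not affect the validity of the argument.
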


\begin{proof}[Proof of Lemma~\ref{lemma_V2}.]
Let $S\subseteq V_2$ be an irreversible $2$-conversion set in $G_2$. We claim that $S$ is also an irreversible $2$-conversion set in $G_3$. If not, then $G_3-S$ contains a cycle $C$ of white vertices that will not be converted to black vertices during the irreversible $2$-threshold process starting with $S$ black. Since $G_3-V_2$ is a tree, $C$ contains a vertex of $V_2$; a contradiction. Therefore, $S$ is a feedback vertex set in $G_3$, equivalently, $G_3-S$ is acyclic, and this is equivalent to the fact that $f(S)=f(V_3)$. Since $G_3-V_2$ is acyclic, we have $f_2(S)=f(S)=f(V_3)=f(V_2)=f_2(V_2)$, and so $S$ is spanning in $(V_2,f_2)$.

Now let $S\subseteq V_2$ be a spanning set in $(V_2,f_2)$. By the previous arguments, this is equivalent to the fact that $S$ is an irreversible $2$-conversion set in $G_3$. 
Let $w_1w_2\dots w_{k-2}$ be the path $G_3-V_2$. Let $e$ be an edge joining $w_{k-2}$ with a vertex of $V_2$. By Lemma~\ref{lemma_two2}, $S$ is an irreversible $2$-conversion set in $G_3-e$. By Observation~\ref{obs_del_deg2}, $S$ is an irreversible $2$-conversion set in $(G_3-e)-w_{k-2}=G_3-w_{k-2}$. Similarly, by a repeated application of Observation~\ref{obs_del_deg2}, $S$ is an irreversible $2$-conversion set in  $G_3-w_{k-2}-w_{k-3}-\cdots -w_1 = G_2$.
\end{proof}

\subsection{Running time of the algorithm}

Lov\'asz and Plummer~\cite{LP86_matching_theory} estimated the running time of 
Lov\'asz's algorithm~\cite{lov81} to be $O(n^{17})$. Since every $2$-dimensional subspace can be represented by a pair of linearly independent vectors, the matching problem for $2$-polymatroids is equivalent to the \emph{linear matroid parity problem}, whose input is a linear matroid with a partition of its edges into pairs, and the goal is to find an independent set with maximum number of pairs. For matroids with $n$ elements and rank $r$, 
Gabow and Stallmann~\cite{GaSt85_linear_parity, GaSt86_linear_parity} gave an algorithm for the linear matroid parity problem running in time $O(nr^{\omega})$, where $O(n^{\omega})$ is the complexity of multiplication of two square $n\times n$ matrices. Moreover, for graphic matroids, Gabow and Stallmann~\cite{GaSt85_linear_parity, GaSt86_linear_parity} gave a very fast algorithm running in time $O(nr\log^6 r)$. They noted that the same algorithm can be used to solve the linear matroid parity problem for cographic matroids, that is, the duals of graphic matroids. This follows from the simple fact that a maximum matching $M$ and a basis $B$ containing $M$ determine a unique maximum matching in the dual matroid in the complement of $B$. 
Takaoka, Tayu and Ueno~\cite{TTU13_feedback} used Gabow's and Stallmann's algorithm to show that the vertex feedback set problem for graphs of maximum degree $3$ can be solved in time $O(n^2\log^6 n)$. However, we are not able to make a similar conclusion for the I2CS problem, and can guarantee only $O(n^{1+\omega})$ running time using the more general algorithm by Gabow and Stallmann.

Although the matroid $(V_3,f)$ constructed by Ueno, Kajitani and Gotoh~\cite{ueno-88} is cographic, our submatroid $(V_2,f_2)$ is not cographic in general. Moreover, $\rho(V_3,f)$ can be smaller than $\rho(V_2,f_2)$; see Figure~\ref{fig_counterexample}. Therefore, we cannot directly use the value $\nu(V_3,f)$ computed by the faster algorithm by Gabow and Stallmann.

\begin{figure}
\begin{center}
 \ifpdf\includegraphics[scale=1]{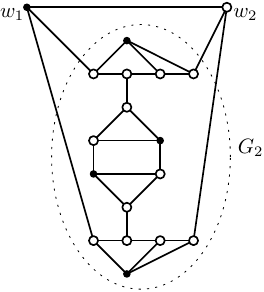}\fi
\end{center}
 \caption{A $3$-regular graph $G_3$ with an irreversible $2$-conversion set of size $5$. Every such set has to contain at least one of the vertices $w_1, w_2$; therefore, the connected subgraph $G_2 = G_3-w_1-w_2$ has no irreversible $2$-conversion set of size $5$.}
 \label{fig_counterexample}
\end{figure}

\subsection{Alternative proofs of Theorem~\ref{theorem_max3}}
Very recently, Takaoka and Ueno~\cite{TU15_irreversible} solved the I2CS problem on graphs of maximum degree $3$ by a reduction to the graphic matroid parity problem, which yields an algorithm running in time in time $O(n^2\log^6 n)$.

An anonymous reviewer suggested the following similar, more straightforward, reduction, using recent results of Kociumaka and Pilipczuk~\cite{KP14_faster} or Cao, Chen and Liu~\cite{CCL15_feedback}.

First, given an instance of the I2CS problem on a graph $G$ of maximum degree $3$, we reduce the problem to a graph $G_2$ with minimum degree $2$, using Lemma~\ref{lemma_privesky}. Then we add a new vertex $w$ attached to each vertex of degree $2$ in $G_2$ and call the new graph $G_3$. It is not hard to see that a set $S\subseteq V(G_2)$ is an irreversible $2$-conversion set in $G_2$ if and only if the complement of $S$ in $G_2$ induces a forest where each component has at most one vertex of degree $2$ in $G_2$~\cite[Lemma 2]{TU15_irreversible}. Therefore, $S$ is an irreversible $2$-conversion set in $G_2$ if and only $S$ is a vertex feedback set in $G_3$~\cite[Lemma 3]{TU15_irreversible}.

Next, we subdivide every edge $e=uv$ with $u,v\in V(G_3)\setminus\{w\}$ by a new vertex $w_{e}$ and call the resulting graph $G_4$. We partition the vertex set of $G_4$ into two sets $U,D$, where $D$ contains all the original vertices of $V(G_2)$, and $U$ consists of $w$ and all the vertices $w_e$. Both sets $U,D$ induce an independent set in $G_4$, and each vertex in $D$ has degree $3$ in $G_4$. Moreover, a set $S\subseteq D$ is a vertex feedback set in $G_4$ if and only if it is a vertex feedback set in $G_3$. The problem of determining a smallest size of a vertex feedback set in $G_4$ that is a subset of $D$ can be solved in time $O(n^2\log^6 n)$ by the result of Cao, Chen and Liu~\cite[Theorem 3.4]{CCL15_feedback} or Kociumaka and Pilipczuk~\cite[Theorem 6]{KP14_faster}, even in a more general case when $U$ and $D$ induce forests.


\section{Irreversible 3-conversion set in toroidal grids}

In this section we show a construction of an irreversible $3$-conversion set $S$
that proves Theorem~\ref{thm:torus}. We denote the toroidal grid 
of size $n \times m$ by $T(n, m)$.
When the dimensions of the grid are clear from the context or not
important, we simply write  $T$ instead of $T(m,n)$.
We assume that the entries of the grid are \emph{squares} and two of them
are neighboring if they share an edge.
First we discuss the general case where $m \neq 4$ and $n \neq 4$.

We define a coordinate system on $T$ such that the left bottom corner is $[0,0]$.
A \emph{pattern} is a small and usually rectangular piece of a grid
where squares are black and white.
\emph{Placing a pattern $P$ at position} $[i,j]$ in $T$ means that the left bottom
square of $P$ is at $[i,j]$ in $T$. If a vertex of $T$ has color defined by
several patterns then it is white only if it is white in all the patterns.
We describe a rectangle of a grid by the coordinates of the bottom left corners of its bottom left and top right
squares.
\emph{Tiling} a rectangle $R$ by a pattern $P$ means placing several non-overlapping 
copies of $P$ to $R$ so that every square of $R$ is covered.

Let $m = 3k+a$ and $n = 3l + b$, where $a,b \in \{0,2,4\}$. Without loss of generality we assume that $a\le b$. By $g$ we denote
the greatest common divisor of $k$ and $l$.

For $i\in \{0,\dots, g-2\}$ we place a pattern \bttx{} at $[0,3i]$. Next we tile the rest
of the rectangle $[0,0][3k-1, 3l-1]$ by a pattern \btt. 
The remaining part of the grid can be decomposed
into three rectangles of dimensions $3k \times b$, $a \times 3l$ and $a \times b$
(some of them may be empty). 

We distinguish several cases depending on $a$ and $b$. They
are depicted in Figure~\ref{fig:torc} and their description follows.

\begin{figure}
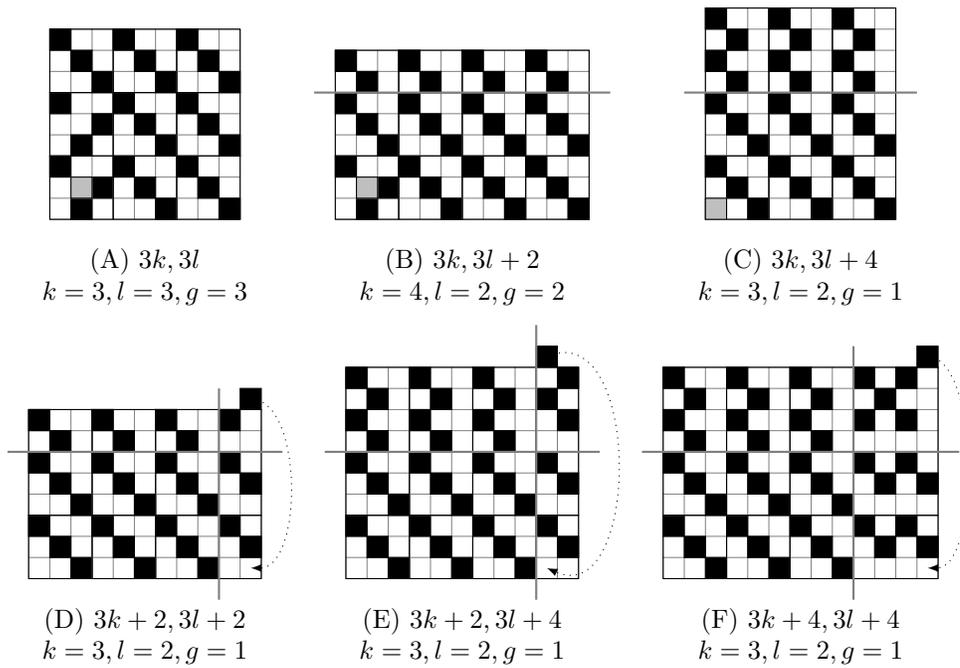

\begin{center}
 \myincludegraphics{img4/obrconfs} 
\end{center}
\caption{The cases for $T(m,n)$. We depict a concrete example for each of the cases. \label{fig:torc}}
\end{figure}

\begin{description}
\item[(A) {\boldmath $a = 0, b = 0$.}] We do not add anything now.

\item[(B) {\boldmath $a = 0, b = 2$.}] We tile the rectangle $3k \times 2$ with \btd.

\item[(C) {\boldmath $a = 0, b = 4$.}] We tile the rectangle $3k \times 4$ with \btc.

\item[(D) {\boldmath $a = 2, b = 2$.}] We tile the rectangle $3k \times 2$ with \btd,
  the rectangle $2 \times 3l$ with \bdt{} and place \bddw{} at $[3k,3l]$.

\item[(E) {\boldmath $a = 2, b = 4$.}] We tile the rectangle $3k \times 4$ with \btc,
  the rectangle $2 \times 3l$ with \bdt{} and place \bdcw{} at $[3k,3l]$.
  
\item[(F) {\boldmath $a = 4, b = 4$.}]
  We tile the rectangle $3k \times 4$ with \btc,
  the rectangle $4 \times 3l$ with \bct{} and place \bccw{} at $[3k,3l]$.
\end{description}

The construction is finished for cases (D), (E) and (F). Cases (A), (B) and (C)
require an extra black square. We place it at $[0,0]$ if $g=1$ and  at $[1,1]$ otherwise. It is colored
gray in Figure~\ref{fig:torc}.

Let $S$ be the set of black squares in our construction. 
In cases (A), (B) and (C) the size of $S$ is $\frac{mn}{3}+1 = \frac{mn+3}{3}$.
In cases (D) and (F) the size of $S$ is $\frac{mn+2}{3}$
and in the last case (E) the size of $S$ is $\frac{mn+4}{3}$.

Now we check the correctness of the construction; that is, we verify that the black squares constitute an irreversible 3-conversion set. We start with 
case (A) where $m = 3k$ and $n = 3l$. 

By a \emph{white cycle} we denote a connected set of white squares $W \subseteq T$ 
where every square in $W$ has at least two neighbors in $W$.
Note that the black squares of $T$ form an irreversible 
3-conversion set if and only if $T$ contains no white cycle.

\begin{observation}\label{obs:g}
Let $T(3k,3l)$ be filled with \btt. Then it contains precisely $g$ disjoint white cycles.
\end{observation}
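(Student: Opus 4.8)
The plan is to analyze the structure of the white squares that remain after tiling $T(3k,3l)$ by the single pattern \btt{}. First I would make explicit what \btt{} looks like: it is a $3\times 3$ block in which the black squares occupy a fixed set of positions (presumably a "knight-like" or diagonal triple per $3\times 3$ cell), so that after tiling, the black squares form a regular sublattice of $T$ and the white squares form their complement. The key first step is to identify the white squares with a coset structure: reading coordinates modulo $3$, the tiling is translation-invariant under the lattice $3\mathbb{Z}\times 3\mathbb{Z}$, so the pattern of white/black depends only on $(x \bmod 3, y \bmod 3)$ shifted by the tiling, and each white square lies on a "diagonal line" of the form $x - y \equiv c \pmod 3$ (or a similar linear relation) determined by which of the three white cells in the $3\times 3$ block it occupies. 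I would show that within $T$, following the rule "each white square has its two white neighbors in consecutive blocks" traces out a closed path that wraps around the torus, and that the number of distinct such closed paths is exactly $\gcd(k,l) = g$.

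Concretely, the second step is to set up the combinatorics of how a white diagonal closes up. A white square sits in block $(p,q)$ with $0 \le p < k$, $0 \le q < l$; moving along the white cycle increments $(p,q)$ by a fixed step, say $(1,1)$, and the offset within the $3\times 3$ block cycles through the three white positions. The path returns to its start precisely when both $p$ and $q$ have returned to their original values modulo $k$ and $l$ respectively — and simultaneously the within-block offset has returned, which happens every $3$ steps. So a white cycle has length $3 \cdot \mathrm{lcm}(k,l)$ and passes through $\mathrm{lcm}(k,l)$ of the $kl$ blocks, using $3$ white squares per visited block; since there are $3kl$ white squares total, the number of disjoint white cycles is $3kl / (3\,\mathrm{lcm}(k,l)) = kl/\mathrm{lcm}(k,l) = \gcd(k,l) = g$. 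The remaining routine check is that each of these paths genuinely satisfies the "every square has at least two neighbors in $W$" condition — i.e. that consecutive white squares along the path are actually adjacent in $T$ and that no white square has a third white neighbor forcing the paths to merge; this amounts to verifying a handful of local adjacencies inside the $3\times 3$ pattern and across block boundaries, and that distinct cycles are vertex-disjoint (which follows from the coset/offset bookkeeping).

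I expect the main obstacle to be purely notational: pinning down the exact coordinates of the black squares in \btt{} (which I cannot see, only the image reference) and then correctly tracking the interaction between the "within-block offset" cycling mod $3$ and the "block index" cycling mod $k$ and mod $l$. Once the step vector of the white path and the period of the offset are correctly identified, the count $3kl/(3\,\mathrm{lcm}(k,l)) = g$ is immediate from the standard fact that a single step of size $1$ on $\mathbb{Z}_k \times \mathbb{Z}_l$ starting from a fixed point has orbit of size $\mathrm{lcm}(k,l)$ and there are $\gcd(k,l)$ such orbits. The only subtlety to get right is confirming that the offset period (here $3$) divides both $k$-period and $l$-period appropriately so that the white path indeed closes up into a simple cycle rather than, say, a longer path visiting blocks with different offsets — but since $\mathrm{lcm}(k,l)$ is the block-period and $3 \mid 3\,\mathrm{lcm}(k,l)$ trivially, this causes no trouble, and the disjointness of the $g$ cycles follows because the invariant "block offset minus $3 \cdot$(step count) mod something" distinguishes them.
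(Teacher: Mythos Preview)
The paper does not actually prove this observation; it is stated as self-evident and immediately used. Your plan is the natural way to justify it, and the core mechanism you identify---white squares forming diagonal staircases that close up on the torus, with the number of orbits counted via $kl/\mathrm{lcm}(k,l)=\gcd(k,l)$---is exactly right.

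One concrete correction to your numerics (which you anticipated, since you flagged not being able to see the image): the pattern \btt{} has \emph{three} black squares per $3\times 3$ block, not six. This is forced by the count $|S|=mn/3+1$ in case~(A). Concretely, the black squares are those with $x+y\equiv 0\pmod 3$ (up to the obvious symmetry), so each white square has exactly two white neighbours and the white region is a disjoint union of cycles. Each cycle is a staircase moving by $(+1,-1)$ every two steps, hence closes after $2\cdot\mathrm{lcm}(3k,3l)=6\,\mathrm{lcm}(k,l)$ steps, not $3\,\mathrm{lcm}(k,l)$. There are $6kl$ white squares in total, not $3kl$. The two factors of $2$ cancel and your final count $6kl/(6\,\mathrm{lcm}(k,l))=\gcd(k,l)=g$ is unchanged, so the argument goes through. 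The local check that consecutive staircase squares are genuinely adjacent and that each white square has exactly two white neighbours is immediate from the description of the black set as $\{(x,y):x+y\equiv 0\pmod 3\}$.
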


\begin{proof}
Instead of $T(3k,3l)$, consider a rectangle $R=[0,k]\times[0,l]$ filled with unit squares, each containing a diagonal black segment. Assume without loss of generality that $k\ge l$. The rectangle $R$ together with the diagonals represent a drawing of a system of closed curves on the torus. Clearly, the number of the closed curves is equal to the number of white cycles in $T(3k,3l)$. The curves in the representation intersect $R$ in a set of disjoint diagonal segments, each intersecting the top or the bottom edge of $R$. Two diagonal segments intersecting the boundary of the rectangle at $[i,0]$ and $[i,l]$, or at $[0,i]$ and $[k,i]$, belong to the same closed curve on the torus. This further implies that two diagonal segments intersecting the bottom edge of $R$ at points $[i,0]$ and $[j,0]$ belong to the same closed curve on the torus whenever $i+l \equiv j \ (\text{mod } k)$. It is a well-known fact from number theory that the equivalence generated by such pairs $(i,j)$ has exactly $g$ classes. This implies that the diagonal segments represent exactly $g$ closed curves on the torus.
\end{proof}

\begin{figure}
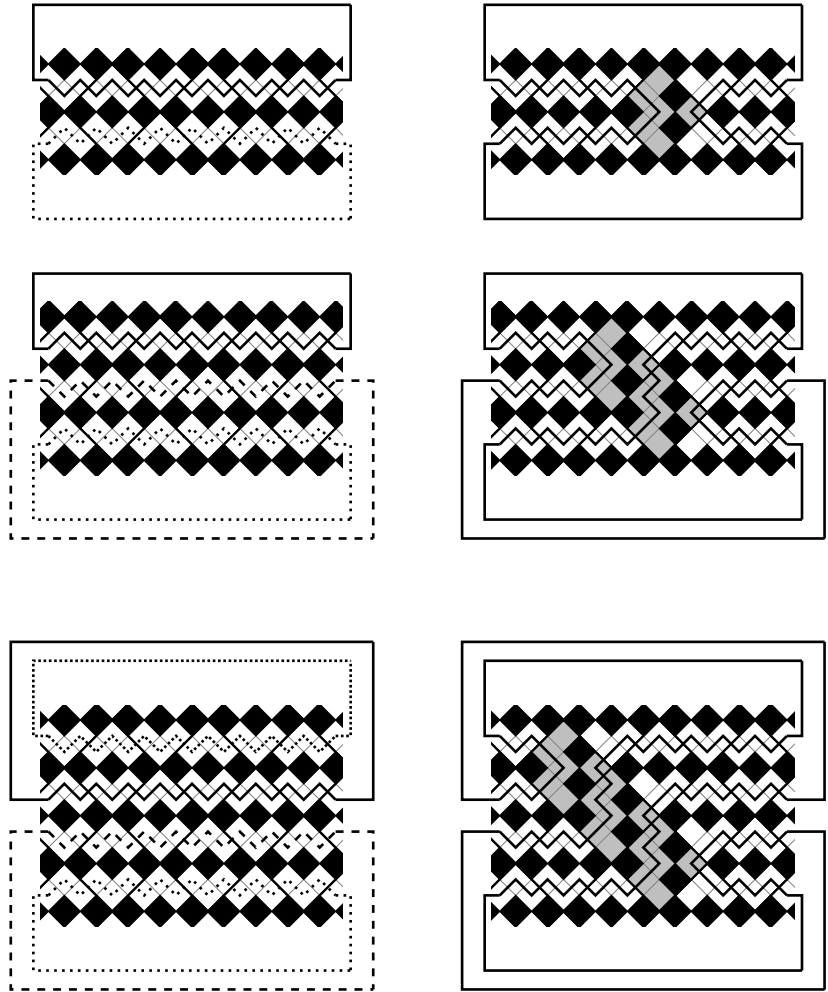

\begin{center}
 \myincludegraphics{img3/obrb} 
\end{center}
\caption{Merging two, three, or four white cycles into one. Every time we replace the pattern in one $3\times 3$ square, two adjacent white cycles are merged into one.
For better illustration, the grid is rotated 45 degrees, only a small
portion of the entire grid is depicted, and the individual cycles are depicted with distinct line styles.
The changed part is in gray color.
}
\label{fig:cykly}
\end{figure}

Start with $T(3k,3l)$ filled with \btt{} as in Observation~\ref{obs:g}.
The idea of our construction is to merge
the $g$ cycles into one long cycle by changing \btt{} to \bttx{}
in the intersection of the first column and the first $g-1$ rows; see Figure~\ref{fig:cykly}.
Finally, we add one more black square to break the resulting unique white cycle.

Observe that the small patterns used in (B)--(F) just extend the size of 
the toroidal grid but do not change the structure of white cycles from 
the $3k \times 3l$ rectangle. Thus the argument for the case  (A) can be easily 
extended to all the other cases.

This finishes the construction for the general case. 

\begin{figure}
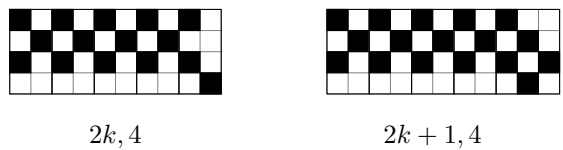

\begin{center}
 \myincludegraphics{img4/obrconfsc} 
\end{center}
\caption{ The cases for $T(m,4)$. \label{fig:ctyri}}
\end{figure}

Now we assume without loss of generality that $n = 4$.
Let $m = 2k + a$, where $a \in \{1,2\}$.
We tile the rectangle $[0,0][2k-1,3]$ by a pattern \bdc. 
If $a = 1$ we place \bjcw{} at $[2k-1, 0]$ and if $a = 2$
we place \bdcd{} at $[2k, 0]$. The resulting grids
are depicted in Figure~\ref{fig:ctyri}.

\section*{Acknowledgements}
We thank Shuichi Ueno for his kind help with clearing our confusion regarding the parity problem for cographic matroids. We also thank J\'ozsef Balogh and Hao Huang for pointing us to the papers containing results on minimum irreversible $k$-conversion sets in high dimensional grids. Finally, we thank the anonymous reviewers for many helpful comments.

\end{document}